\documentclass[12pt]{article}
\usepackage[english]{babel}

\usepackage{amsmath,amsfonts,amssymb,amsthm}
\usepackage{graphicx}
\usepackage{xspace}
\usepackage{a4wide}
\usepackage{hyperref}

\newcommand{\adj}{\mathrm{T}}

\newtheorem{definition}{Definition}
\newtheorem{theorem}{Theorem}
\newtheorem{example}{Example}
\newtheorem{remark}{Remark}
\newtheorem{lemma}{Lemma}
\newtheorem{corollary}{Corollary}

\newcommand{\bydef}{:=}

\newcommand{\bx}{\mathbf{x}}
\newcommand{\by}{\mathbf{y}}

\newcommand{\be}{\mathbf{e}}
\newcommand{\R}{\mathbb{R}}

\newcommand{\A}{\ensuremath{\mathbb{A}}\xspace}
\newcommand{\B}{\ensuremath{\mathbb{B}}\xspace}
\newcommand{\N}{\mathbb{N}}
\newcommand{\ba}{\mathbf{a}}
\newcommand{\bb}{\mathbf{b}}

\newcommand{\btheta}{\boldsymbol{\theta}}
\newcommand{\bpi}{\boldsymbol{\pi}}

\usepackage[authoryear]{natbib}
\usepackage{authblk}

\begin{document}

\title{Insuperable strategies in two-player and reducible multi-player games}

\author[1,2]{Fabio A. C. C. Chalub\footnote{Corresponding author: facc@fct.unl.pt}}

\author[1,3]{Max O. Souza\footnote{maxsouza@id.uff.br}}



\affil[1]{Center for Mathematics and Applications (NOVA Math), NOVA School of Science and Technology, Universidade NOVA de Lisboa, Quinta da Torre, 2829-516, Caparica, Portugal.}


\affil[2]{Department of Mathematics, NOVA School of Science and Technology, Universidade NOVA de Lisboa, Quinta da Torre, 2829-516, Caparica, Portugal.}


\affil[3]{Instituto de Matemática e Estatística, Universidade Federal Fluminense, R. Prof. Marcos Waldemar de Freitas Reis, s/n, Bloco H, Niterói, RJ 24120-201, Brasil}


\maketitle

\begin{abstract}
Real populations are seldom found at the Nash equilibrium strategy. The present work focuses on how population size can be a relevant evolutionary force diverting the population from its expected Nash equilibrium. We introduce the concept of insuperable strategy, a strategy that guarantees that no other player can have a larger payoff than the player that adopts it. We show that this concept is different from the rationality assumption frequently used in game theory and that for small populations the insuperable strategy is the most probable evolutionary outcome for any dynamics that equal game payoff and reproductive fitness. We support our ideas with several examples and numerical simulations. We finally discuss how to extend the concept to multiplayer games, introducing, in a limited way, the concept of game reduction.
\end{abstract}

\textbf{Keywords:} Game-theory, Finite Populations, Insuperable Strategies, Nash Equilibrium, Farkas' Lemma.

\section{Introduction}
\label{sec:intro}

Game theory, introduced in~\citet{VonNeumannMorgenstern}, is the branch of mathematics that studies strategic behavior. It has found its way into several applications, notably in economics and biology. See~\citet{Gintis,HofbauerSigmund_1998,Broom_Rychtar_2013} for general works that covers several distinct features of it.

A game, in normal form, is defined by a set of players, a set of elementary strategies available for each player, and a set of (real) payoff functions, one for each player. In this work, we will mostly consider two-player games, and the set of strategies will always be homeomorphic to the finite-dimensional $n-1$ simplex $\Delta^{n-1}\bydef\{\mathbf{x}\in\R_+^n|\sum_{i=1}^nx_i=1\}$. Therefore, the payoff functions will be given by real finite dimensional matrices --- henceforth, we shall write $\R^{m,n}$ to denote the set of all real $m\times n$ matrices. The canonical directions of the simplex are also termed \emph{pure} strategies, while the others are the \emph{mixed} strategies. By extension, we will include pure strategies whenever we say ``mixed'' strategies, except otherwise stated. 

For two-player multi-strategy games, we assume that players \A and \B have $n\in\N$ and $m\in\N$ pure strategies, respectively. The set of accessible/mixed strategies is given by $\Delta^{n-1}$ and $\Delta^{m-1}$, respectively. When player \A plays strategy $\bx\in\Delta^{n-1}$ and player \B plays strategy $\by\in\Delta^{m-1}$, player \A and player \B payoffs will be given by $\bx^\adj\mathbf{A}\by$ and $\by^\adj\mathbf{B}\bx=\bx^\adj\mathbf{B}^\adj\by$, respectively, where $\mathbf{A}\in\R^{n,m}$ and $\mathbf{B}\in\R^{m,n}$ are the payoff matrices for players \A and \B, respectively. Whenever $n=m$ and $\mathbf{A}=\mathbf{B}$, we say that the game is symmetric. 

In the study of human economic behavior, it is commonly assumed that each player tries to maximize his or her payoff. This assumption is so central, that individuals who abide by it are called \emph{rational individuals}. Under this assumption, the concept of \emph{Nash Equilibrium} naturally emerges. It is defined by a set of strategies (for general $N$-player game) in which no individual can unilaterally deviate from the equilibrium and receive a larger payoff. 

However, several empirical studies show that humans frequently deviate from Nash Equilibrium, and consequently, from the rationality assumption itself~\citep{Camerer_1997,mailath1998people}. This can happen due, e.g., to ethical behavior or fairness intention~\citep{Falk_2008,HOFFMAN1994346}, seeking for a better reputation in society~\citep{NowakSigmund_Nature}, heterogeneous interest in some specific individuals in the population (helping some --- possibly family members or people with merit --- to the detriment of others)~\citep{FehrFishbacher_TEJ02,EckelGrossman_GEB96}, the desire to punish what is seen as incorrect (selfish, unethical, etc) behavior~\citep{FalkFehrFishbacher_Econometrica}. One interesting example that makes players deviate from the rational behavior is the so-called \emph{last place aversion}, in which individuals prefer to receive less and not be the last than to receive more and be the last (e.g., salaries in an organization)~\citep{Kuziemko_2014}. From a comprehensive review on this topic, cf.~\cite{Ledyard+1995+111+194}.

From the above discussion, it is clear that the assumption of human rationality explains only part of human behavior. However, instead of considering the deviation of human rationality as an anomaly that requires \emph{ad hoc} explanation, we should ask how and why it evolved and, in parallel, why and how non-rational behavior evolved through biological processes. 

The idea that Darwinian dynamics may favor relative maximization of payoff instead of absolute maximization is not new. It is central to the concept of Hamiltonian spite, in which a decrease of one's payoff is followed by a larger decrease of the competitor's payoff, favoring the evolution of a non-Nash equilibria~\citep{Hamilton_70,Stegeman_04}. Evolution of spite requires finite populations~\citep{Smead_Forber_Evolution_13} and is more likely in small populations~\citep{Forber_Smead_TJP14,DUFWENBERG2000147,Marlowe_eta_PRSB11}. In corporate economics, spite is closely related to the practice of price predation, or dumping. For the purpose of the present work, dumping is defined as the act of abusing a dominant position in the market by selling products below their cost price to drive competitors out of business~\citep{Ordover_1981}. 

The objective of the present work is to use simple and standard game theory models to show that when the number of participants is small, evolutionary dynamics may drive the evolution to a different concept of equilibria. We will introduce the concept of \emph{insuperable} strategy, which may not be rational but guarantees for a sufficiently small population a long-term fixation probability not smaller than any other strategy. In particular, not smaller than the opponent's strategy.

On the other hand, consider a large population in which the interaction between individuals is modeled by a game. Assume further that the population evolves according to a certain dynamic that identifies biological fitness (i.e., reproductive viability) with the game's payoff. It is possible to prove that a strategist adopting the Nash-equilibrium strategy will have a fixation probability not smaller than any other strategy~\citep{ChalubSouza09b,ChalubSouza14a}. 

A clear example of the dichotomy large \emph{vs} small population in terms of the final outcome of the evolutionary dynamics was presented in~\citet{ChalubSouza_2017}. Consider the $N$-player public good game ($N$-PGG), in which any individual has the choice to contribute or not one monetary unity for a common pool. After all contributions, the total value is multiplied by $r>1$ and divided equally among all players, independently of their contribution. If $r<N$ rational players will not contribute, but if $r>N$ the return will be $r/N>1$ for each contributed monetary unit, and therefore, rational players should contribute. On the other hand, for any value of $r$, evolutionary dynamics privileging higher payoff will likely select noncontributors (as his or her payoff will be always higher in one unity of the contributor's payoff). Therefore, for $r>N$, the most likely outcome of the evolutionary dynamics is a (non-Nash) equilibrium of non-rational players.

\begin{remark}
\label{rem:largesmall}
The example of the previous paragraph shows that if the population is smaller than $r$, the most likely result of evolution is non-rational behavior, while for a larger $N$ it is the other way round. The parameter $r$ introduces a natural scale such that we may guarantee the advantage of Nash behavior for large populations. In fact, if the population is sufficiently large and reproductive fitness satisfies the weak selection principle, a player adopting the Nash strategy of the game has a fixation probability always not smaller than the neutral one against any other player. The proof turns out to be more involved than expected, using the subharmonic properties of the solutions to a degenerated partial differential equation (the Kimura equation), cf.~\citet{ChalubSouza09b,ChalubSouza14a}. We are not aware of any simpler proof of this result.   
\end{remark}

The present work will explore the interplay between game theory and evolutionary dynamics for small populations. We will show a powerful result showing that in any normal-form two-player game theory (as defined at the beginning of this work) at least one player has at least one strategy that we call insuperable. Assuming that player \A has one insuperable strategy and adopts it, player \B will have a payoff at most equal to player's \A payoff. Therefore, for a sufficiently small population, the fixation probability of type \A strategist will never be less than the neutral one. In general, the insuperable strategy is not a Nash-equilibrium. For asymmetric games, it is possible that only one player has an insuperable strategy. 

This work is structured as follows: in Sec.~\ref{sec:deff}, we introduce the basic definitions and main results. In Sec.~\ref{sec:examples}, we explore several examples, always highlighting the difference from the Nash equilibrium analysis and showing that frequently the behavior found in nature (including in humans) lies between the Nash equilibrium and the insuperable strategy. In Sec.~\ref{sec:finance}, we show that certain results in finance can be explained similarly. In Sec.~\ref{sec:reduction}, we will show that certain $N$-player games, $N>2$, may be reduced to equivalent (in a sense to be made more precise later on) two-player games and the previous analysis follows.
Finally, in Sec.~\ref{sec:conclusions}, we discuss the implications of the present work to understand several evolutionary phenomena.

\section{Definitions and main results}
\label{sec:deff}

From now on, we consider that vectors are \emph{naturally} column-vectors and denote transposes by ${}^\adj$. We also write $\be_i$ for the $i$-th vector in the canonical base of $\R^n$.

Before stating our results, we introduce some notation:
\begin{definition}
Given a vector $\mathbf{v}$, we say that $\mathbf{v}\ge\mathbf{0}$ if $v_i\ge0$ for all $i$; $\mathbf{v}\gg\mathbf{0}$ if $v_i>0$ for all $i$; $\mathbf{v}>\mathbf{0}$ if $\mathbf{v}\ge \mathbf{0}$, and $\mathbf{v}\neq\mathbf{0}$. Finally, we say that $\mathbf{v}\ge\mathbf{u}$ if $\mathbf{v}-\mathbf{u}\ge\mathbf{0}$ and similar definitions for the other two notations. We also consider similar definitions for $\le$, $<$, and $\ll$.
\end{definition}

Consider a two-player game, in which player \A has $n$ pure strategies and payoff matrix $\mathbf{A}\in\R^{n, m}$ and player \B has $m$ pure strategies and payoff matrix $\mathbf{B}\in\R^{m, n}$. We say that a certain strategy is \emph{insuperable} if a player that adopts it can guarantee a result equal or superior to his opponent. 
More precisely,

\begin{definition}
    Consider a two-player game, possibly non-symmetric, with payoff matrices $\mathbf{A}\in\R^{n,m}$ and $\mathbf{B} \in \R^{m,n}$, for players \A and \B, respectively. For player \A, the strategy $\bx_*\in\Delta^{n-1}$ is insuperable if
    \begin{equation}
    \label{eq:unbeat:A}
    \by^\adj \mathbf{B} \bx_* \leq \bx_*^\adj \mathbf{A} \by\ ,\quad\forall \by \in \Delta^{m-1}\ .
    \end{equation}
    Similarly, for player \B, $\by_*\in\Delta^{m-1}$ is an insuperable if
    \begin{equation}
    \label{eq:unbeat:B}
    \bx^\adj \mathbf{A} \by_* \leq \by_*^\adj \mathbf{B} \bx\ ,\quad \forall \bx \in \Delta^{n-1}\ .
    \end{equation}
If \A and \B have insuperable strategies, $\bx_*$ and $\by_*$, respectively, we say that $(\bx_*,\by_*)$ is an insuperable pair. It is clear, that in this case, both inequalities hold as equalities. If one of those inequalities is strict (i.e., ``$\le$'' is replaced by ``$<$''), we say that the corresponding strategy is \emph{strictly insuperable}.
\end{definition}

For the sake of completeness, we recall that the pair of strategies $(\hat\bx,\hat\by)$ is called a \emph{Nash Equilibrium} if $\hat \bx^\adj\mathbf{A}\hat \by\ge \bx^\adj\mathbf{A}\hat\by$ and $\hat\by^\adj\mathbf{B}\hat\bx\ge \by^\adj\mathbf{B}\hat\bx$, for all strategies $\bx$, $\by$, cf.~\citet{Gintis,HofbauerSigmund_1998}.

\begin{remark}\label{rem:spite}
In~\citet{Nowak:06}, the expression \emph{unbeatable strategy} is used for a strategy $S_k$ such that for all other strategies $S_i$, it is true that $E(S_k, S_k) > E(S_i, S_k)$ and $E(S_k, S_i) > E(S_i, S_i)$, where $E(S_k,S_i)$ is the expected payoff of the strategy $S_k$ against $S_i$. This definition is based on the original ideas of~\citet{Hamilton_1964}, which eventually led to the definition of Evolutionarily Stable Strategy, cf.~\citet{Smith_Price_1973,Smith_1982}. 
Despite the similar name, there are important differences between insuperable strategies and the unbeatable strategies referred to above. First, we do not assume that the game is played in a population, let alone that the population is large. Second, we are concerned with a direct conflict between a certain number of opponents (two, in most of the present work, but check Sec.~\ref{sec:reduction}). Third, in the classical works, the genetic correlation between individuals in the population (that, eventually play against each other) is a fundamental feature of the definition, while in the present case, the relatedness between different players has no role. See also \citet{kojima2006stability}.
\end{remark}

\begin{lemma}
Consider a two-player game with payoff matrices $\mathbf{A}$ and $\mathbf{B}$ for players \A and \B, respectively. Let
 $\mathbf{L}\bydef\mathbf{A}^\adj-\mathbf{B}$. Then $\bx_*$ ($\by_*$) is an insuperable strategy for player \A (\B, respect.) if and only if $\mathbf{L}\bx_*\geq0$ ($\by_*^\adj\mathbf{L}\le \mathbf{0}$, respect.) If $(\bx_*,\by_*)$ is a pair of insuperable strategies, then $\by_*^\adj\mathbf{L}\mathbf{x}_*=0$.  
\end{lemma}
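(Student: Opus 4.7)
The plan is to unpack the definition of insuperability and observe that each inequality in~\eqref{eq:unbeat:A}--\eqref{eq:unbeat:B} can be rewritten as a single bilinear inequality in the matrix $\mathbf{L}$, after which the quantification over the simplex reduces to a vertex-wise check.

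For the first equivalence, I would start from~\eqref{eq:unbeat:A}. Since $\bx_*^\adj \mathbf{A}\by$ is a scalar, it equals its own transpose $\by^\adj \mathbf{A}^\adj \bx_*$. Thus~\eqref{eq:unbeat:A} reads
\begin{equation*}
\by^\adj \mathbf{B}\bx_* \le \by^\adj \mathbf{A}^\adj \bx_*, \quad \forall \by\in\Delta^{m-1},
\end{equation*}
which is the same as $\by^\adj \mathbf{L}\bx_* \ge 0$ for every $\by\in\Delta^{m-1}$. The ``only if'' direction is obtained by specialising $\by = \be_i$ for each $i=1,\dots,m$, yielding $(\mathbf{L}\bx_*)_i \ge 0$ and hence $\mathbf{L}\bx_*\ge \mathbf{0}$. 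The ``if'' direction is then immediate: any $\by\in\Delta^{m-1}$ has non-negative components summing to one, so $\by^\adj (\mathbf{L}\bx_*)$ is a convex combination of the non-negative entries of $\mathbf{L}\bx_*$.

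The statement for player \B is obtained by the same manoeuvre. Rewriting~\eqref{eq:unbeat:B} using $\by_*^\adj \mathbf{B}\bx = \bx^\adj \mathbf{B}^\adj \by_*$, the condition becomes $\bx^\adj(\mathbf{A}-\mathbf{B}^\adj)\by_* \le 0$ for every $\bx\in\Delta^{n-1}$, i.e.\ $\by_*^\adj \mathbf{L}\bx \le 0$ for every $\bx\in\Delta^{n-1}$. Testing $\bx=\be_j$ gives $\by_*^\adj \mathbf{L}\le\mathbf{0}$, and the converse again follows from convexity.

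For the final identity, suppose $(\bx_*,\by_*)$ is an insuperable pair. From $\mathbf{L}\bx_*\ge\mathbf{0}$ and $\by_*\ge\mathbf{0}$ I get $\by_*^\adj(\mathbf{L}\bx_*) \ge 0$; from $\by_*^\adj\mathbf{L}\le\mathbf{0}$ and $\bx_*\ge\mathbf{0}$ I get $(\by_*^\adj\mathbf{L})\bx_* \le 0$. Associativity of matrix multiplication then forces $\by_*^\adj\mathbf{L}\bx_* = 0$. There is no serious obstacle here; the only delicate point is being careful with transposes, since the two payoff matrices $\mathbf{A}$ and $\mathbf{B}$ have their rows and columns indexed asymmetrically between the two players.
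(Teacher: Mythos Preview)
Your proof is correct and follows essentially the same approach as the paper: rewrite the insuperability inequality as $\by^\adj\mathbf{L}\bx_*\ge 0$ (resp.\ $\by_*^\adj\mathbf{L}\bx\le 0$), test on the vertices $\be_i$ for the forward implication, and use non-negativity of simplex vectors for the converse. Your treatment of the final identity is a bit more explicit than the paper's (which simply calls it ``immediate''), but the argument is the same sandwich between the two sign conditions.
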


    \begin{proof}
    It is clear that Eq.~\eqref{eq:unbeat:A} is equivalent to $\by^\adj\mathbf{L}\bx_*\ge 0$ for all $\by\in\Delta^{m-1}$. $\Rightarrow$) If $\mathbf{L}\bx_*\not\ge\mathbf{0}$, then there is $i$ such that $\left(\mathbf{L}\bx_*\right)_i<0$, and we take $\by=\be_i$. $\Leftarrow$) Follows from the fact that for all $\by\in\Delta^{m-1}$, $\by\ge\mathbf{0}$. For player \B, proofs follow identically. The last claim is immediate.
\end{proof}

\begin{theorem}
\label{thm:unbeat:A+B}
In a two-player game in normal form, at least one player has an insuperable strategy.
\end{theorem}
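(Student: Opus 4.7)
The plan is to reformulate the statement with the aid of the preceding lemma and then invoke von Neumann's minimax theorem. Writing $\mathbf{L}=\mathbf{A}^\adj-\mathbf{B}\in\R^{m,n}$, the lemma tells us that player \A has an insuperable strategy iff the system $\mathbf{L}\bx\ge\mathbf{0}$, $\bx\in\Delta^{n-1}$, is feasible, and player \B has one iff the system $\by^\adj\mathbf{L}\le\mathbf{0}$, $\by\in\Delta^{m-1}$, is feasible. The task therefore reduces to ruling out the case in which both systems are simultaneously infeasible.

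To do this I would interpret $\mathbf{L}$ as the payoff matrix of an auxiliary two-player zero-sum game in which \A (the maximizer) chooses $\bx\in\Delta^{n-1}$, \B (the minimizer) chooses $\by\in\Delta^{m-1}$, and the payoff to \A equals $\by^\adj\mathbf{L}\bx$. The minimax theorem supplies a common value $v\in\R$ and optimal mixed strategies $\bx^*,\by^*$ such that $\by^\adj\mathbf{L}\bx^*\ge v$ for every $\by\in\Delta^{m-1}$ and $\by^{*\adj}\mathbf{L}\bx\le v$ for every $\bx\in\Delta^{n-1}$. I then split on the sign of $v$: if $v\ge 0$, specializing the first inequality to $\by=\be_i$ for each $i$ yields $\mathbf{L}\bx^*\ge\mathbf{0}$, so $\bx^*$ is insuperable for \A; if $v\le 0$, specializing the second to $\bx=\be_j$ for each $j$ yields $\by^{*\adj}\mathbf{L}\le\mathbf{0}$, so $\by^*$ is insuperable for \B. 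Since $v$ is always either nonnegative or nonpositive, at least one player must have an insuperable strategy.

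The heavy lifting is thus packed into the minimax theorem — or, equivalently, into Farkas' lemma, which is listed among the paper's keywords; once that result is granted, the remainder is just reading off signs from the two saddle-point inequalities. The only mild subtlety is that the argument does not single out in advance \emph{which} player has the insuperable strategy: it is the sign of $v$ that decides, and the borderline case $v=0$ is precisely the one in which an insuperable pair can arise, matching the identity $\by_*^\adj\mathbf{L}\bx_*=0$ recorded in the lemma.
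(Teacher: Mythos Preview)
Your argument is correct. The paper, however, proceeds differently: it applies Farkas' lemma directly to $\mathbf{K}=-\mathbf{L}$ with $\mathbf{b}=(-1,\dots,-1)^\adj$, obtaining the alternative ``either some $\bx\ge\mathbf{0}$ satisfies $\mathbf{L}\bx\gg\mathbf{0}$, or some $\by>\mathbf{0}$ satisfies $\by^\adj\mathbf{L}\le\mathbf{0}$'', and then normalizes the surviving vector into the simplex. Your route via the minimax theorem is arguably more transparent for a game-theory audience: the auxiliary zero-sum game is a natural object, the sign of its value $v$ cleanly separates the two cases, and the symmetry between \A and \B is manifest rather than hidden in the choice of $\mathbf{b}$. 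The paper's route, by contrast, stays closer to the linear-inequality machinery it cites and even yields a slightly sharper dichotomy (the \A-branch produces a \emph{strictly} insuperable strategy), which foreshadows Corollary~\ref{cor:ubeat:AB}. Since the minimax theorem and Farkas' lemma are equivalent, the two proofs have the same depth; yours just packages the duality in a form that makes the case split on $v$ do all the work.
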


Before proving the existence lemma, let us recall, for the convenience of the reader,  a well-known result:

\begin{lemma}[Farkas' Lemma~\protect{\citep[Corolary 36]{Border2020}}]
Let $\mathbf{K}\in\R^{m, n}$, and $\mathbf{b}\in\R^m$. Then, exactly one of the two following statements is true
\begin{enumerate}
\item There exists $\bx\in\R^n$, $\mathbf{x}\ge\mathbf{0}$, such that $\mathbf{Kx}\le\mathbf{b}$.
\item There exists $\by\in\R^m$, $\mathbf{y}>\mathbf{0}$, such that $\by^\adj\mathbf{K}\ge\mathbf{0}$ and $\mathbf{b}^\adj\by<0$.
\end{enumerate}
\end{lemma}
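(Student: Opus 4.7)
The plan is to prove Farkas' Lemma by combining an elementary incompatibility check (for the assertion that the two alternatives cannot both hold) with a separating hyperplane argument (for the assertion that one of them must hold).

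First, I would verify that statements (1) and (2) cannot coexist. Assume $\bx\ge\mathbf{0}$ satisfies $\mathbf{K}\bx\le\mathbf{b}$ and $\by>\mathbf{0}$ satisfies $\by^\adj\mathbf{K}\ge\mathbf{0}$ and $\mathbf{b}^\adj\by<0$. Since $\by\ge\mathbf{0}$ and $\mathbf{b}-\mathbf{K}\bx\ge\mathbf{0}$, the product $\by^\adj(\mathbf{b}-\mathbf{K}\bx)\ge 0$; combining this with $\by^\adj\mathbf{K}\bx\ge 0$ (which follows from $\by^\adj\mathbf{K}\ge\mathbf{0}$ and $\bx\ge\mathbf{0}$) yields $\mathbf{b}^\adj\by\ge 0$, contradicting $\mathbf{b}^\adj\by<0$.

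For the substantive direction, suppose (1) fails. Introducing a slack vector $\mathbf{s}\ge\mathbf{0}$ rewrites $\mathbf{K}\bx\le\mathbf{b}$ as $\mathbf{K}\bx+\mathbf{s}=\mathbf{b}$, so (1) amounts to asking whether $\mathbf{b}$ lies in the convex cone
\begin{equation*}
C\bydef\{\mathbf{K}\bx+\mathbf{s}\,:\,\bx\ge\mathbf{0},\ \mathbf{s}\ge\mathbf{0}\}\subseteq\R^m,
\end{equation*}
which is precisely the conical hull of the $n+m$ columns of the augmented matrix $[\mathbf{K}\,|\,\mathbf{I}_m]$. The crucial technical step is to establish that $C$ is closed. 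I would handle this as an auxiliary lemma, using either a Carath\'eodory-type reduction (every element of $C$ admits a representation as a nonnegative combination of a linearly independent subset of generators, reducing closedness to a compactness argument on a simplex of coefficients) or induction on the number of generators.

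Once $C$ is known to be closed and $\mathbf{b}\notin C$, the strict separating hyperplane theorem produces $\by\in\R^m$ and $\alpha\in\R$ with $\by^\adj\mathbf{b}<\alpha\le\by^\adj\mathbf{v}$ for all $\mathbf{v}\in C$. Because $\mathbf{0}\in C$, one has $\alpha\le 0$; because $C$ is a cone, replacing any $\mathbf{v}\in C$ by $t\mathbf{v}$ and letting $t\to\infty$ forces $\by^\adj\mathbf{v}\ge 0$ for every $\mathbf{v}\in C$. Specializing $\mathbf{v}=\mathbf{K}\be_i$ gives $\by^\adj\mathbf{K}\ge\mathbf{0}$, and $\mathbf{v}=\be_i$ gives $\by\ge\mathbf{0}$; together with $\mathbf{b}^\adj\by<0$, which forces $\by\neq\mathbf{0}$, this yields $\by>\mathbf{0}$ in the paper's notation, establishing (2). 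The main obstacle I anticipate is precisely the closedness of the finitely generated cone $C$: the naive ``linear image of a closed set is closed'' reasoning fails in general, so a dedicated argument is required before the hyperplane separation can be invoked.
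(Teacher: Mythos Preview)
The paper does not actually prove this lemma: it is stated with a citation to \citet{Border2020} and used as a black box in the proof of Theorem~\ref{thm:unbeat:A+B}. Your separating-hyperplane argument is the standard proof and is correct as outlined; the incompatibility step is clean, and your identification of the closedness of the finitely generated cone $C$ as the one nontrivial ingredient is exactly right. Since there is nothing in the paper to compare against, the only comment is that your proposal supplies strictly more than the paper does.
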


\begin{proof}[Proof of Thm.~\ref{thm:unbeat:A+B}]
We assume that $\mathbf{L}\ne\mathbf{0}$, otherwise the result is immediate.
Let us choose $\mathbf{b}=(-1,-1,\dots,-1)$ and $\mathbf{K}=-\mathbf{L}$. Then, either there is $\bx\geq\mathbf{0}$ such that $-\mathbf{L}\mathbf{x}\le-(1 \dots 1)^\adj$, and therefore $\mathbf{Lx}\gg\mathbf{0}$, or there is $\by>\mathbf{0}$ such that $-\by^\adj\mathbf{L}\ge\mathbf{0}$ and $\sum_iy_i>0$. In the former case, it is clear that $\bx>\mathbf{0}$, otherwise $\mathbf{L}\bx=\mathbf{0}$; therefore $\left(\sum_{i=1}^nx_i\right)^{-1}\bx\in\Delta^{n-1}$ is insuperable; in the latter case, $\left(\sum_{i=1}^my_i\right)^{-1}\by\in\Delta^{m-1}$ is insuperable. 
\end{proof}

For a symmetric game, the following corollary shows that if both players try to have a payoff not smaller than the opponents', an equilibrium will emerge.

\begin{corollary}
For a symmetric two-player game, all players have insuperable strategies. In particular, there is one symmetric pair of insuperable strategies.
\end{corollary}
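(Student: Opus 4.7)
The plan is to exploit the symmetric structure of the game to show that the matrix $\mathbf{L} \bydef \mathbf{A}^\adj - \mathbf{B}$ becomes antisymmetric, and then deduce that any insuperable strategy for one player automatically yields an insuperable strategy for the other.

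First, I would observe that in a symmetric game we have $\mathbf{A}=\mathbf{B}$ (and $n=m$), so $\mathbf{L} = \mathbf{A}^\adj - \mathbf{A}$ satisfies $\mathbf{L}^\adj = -\mathbf{L}$. Next, I would apply Theorem~\ref{thm:unbeat:A+B}: at least one of the two players has an insuperable strategy. By the symmetry of the roles in a symmetric game, there is no loss of generality in assuming player~\A owns it, so by the preceding lemma there exists $\bx_* \in \Delta^{n-1}$ with $\mathbf{L}\bx_* \ge \mathbf{0}$.

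The key step is then a one-line consequence of antisymmetry: transposing gives $\bx_*^\adj \mathbf{L}^\adj \ge \mathbf{0}$, and since $\mathbf{L}^\adj = -\mathbf{L}$ this is exactly $\bx_*^\adj \mathbf{L} \le \mathbf{0}$. By the characterization in the preceding lemma, this is precisely the condition for $\bx_*$ (viewed now as an element of $\Delta^{m-1}=\Delta^{n-1}$) to be insuperable for player~\B. Therefore $\by_* \bydef \bx_*$ is insuperable for~\B, both players have insuperable strategies, and $(\bx_*,\bx_*)$ is a symmetric insuperable pair.

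There is essentially no obstacle here beyond spotting the antisymmetry of $\mathbf{L}$ in the symmetric case; once that is noted, the corollary reduces to a transposition argument applied to the conclusion of Theorem~\ref{thm:unbeat:A+B}. The only mild subtlety worth mentioning explicitly is why it suffices to apply Theorem~\ref{thm:unbeat:A+B} to one player: since the game is symmetric, swapping the labels \A and \B amounts to replacing $\mathbf{L}$ by $-\mathbf{L}$, so the theorem's conclusion for either player produces the same vector $\bx_*$ up to this sign, which is precisely what the antisymmetry already encodes.
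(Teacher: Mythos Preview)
Your argument is correct and is essentially the same approach as the paper's, which simply asserts that the result ``follows immediately from Theorem~\ref{thm:unbeat:A+B}''. You have spelled out precisely what ``immediately'' means here: the antisymmetry $\mathbf{L}^\adj=-\mathbf{L}$ in the symmetric case turns the condition $\mathbf{L}\bx_*\ge\mathbf{0}$ into $\bx_*^\adj\mathbf{L}\le\mathbf{0}$, so the same vector is insuperable for both players and $(\bx_*,\bx_*)$ is a symmetric insuperable pair.
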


\begin{proof}
    Follows immediately from Thm.~\ref{thm:unbeat:A+B}.
\end{proof}

In a symmetric game, if both players adopt insuperable strategies, their payoff will be equal. In contrast, in non-symmetric games, it is possible that a strategy guarantees a strictly higher payoff than the opponent, but when this happens, it will happen to only one of the players, as shown in the last result from this section.

\begin{corollary}\label{cor:ubeat:AB}
There is an insuperable pair of strategies if and only if there is no strictly insuperable strategy.
\end{corollary}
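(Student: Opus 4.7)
The plan is to translate both sides of the equivalence into the linear inequalities involving $\mathbf{L}\bydef\mathbf{A}^\adj-\mathbf{B}$ supplied by the preceding lemma: $\bx_*$ is (strictly) insuperable for \A iff $\mathbf{L}\bx_*\ge\mathbf{0}$ ($\gg\mathbf{0}$), and $\by_*$ is (strictly) insuperable for \B iff $\by_*^\adj\mathbf{L}\le\mathbf{0}$ ($\ll\mathbf{0}$). Once this dictionary is in place, each direction reduces to a brief linear-algebraic argument.

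For the forward direction, I would argue by contradiction. Suppose $(\bx_*,\by_*)$ is an insuperable pair and that some $\bx_0\in\Delta^{n-1}$ is strictly insuperable for \A. Then $\mathbf{L}\bx_0\gg\mathbf{0}$; since $\by_*\ge\mathbf{0}$ with $\by_*\ne\mathbf{0}$ (as an element of $\Delta^{m-1}$), this gives $\by_*^\adj\mathbf{L}\bx_0>0$. But $\by_*^\adj\mathbf{L}\le\mathbf{0}$ combined with $\bx_0\ge\mathbf{0}$ forces $\by_*^\adj\mathbf{L}\bx_0\le 0$, a contradiction. The symmetric argument rules out a strictly insuperable strategy for \B.

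For the reverse direction, I would invoke Farkas' Lemma twice, exactly in the style of the proof of Theorem~\ref{thm:unbeat:A+B}. The first application (with $\mathbf{K}=-\mathbf{L}$ and $\mathbf{b}=(-1,\dots,-1)^\adj$) produces the dichotomy: either \A has a strictly insuperable strategy, or \B has an insuperable one. A dual application with $\mathbf{K}=\mathbf{L}^\adj$ and the appropriately sized all-$(-1)$ right-hand side yields, after transposing, the mirror dichotomy: either \B has a strictly insuperable strategy, or \A has an insuperable one. Under the hypothesis that no strictly insuperable strategy exists, both dichotomies force their second alternative, so each player owns an insuperable strategy and those strategies constitute an insuperable pair.

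The step I expect to demand the most care is the bookkeeping in the second Farkas application: verifying that ``$\mathbf{L}^\adj\by\ll\mathbf{0}$'' transposes correctly to $\by^\adj\mathbf{L}\ll\mathbf{0}$ (strict insuperability for \B) and ``$\bx^\adj\mathbf{L}^\adj\ge\mathbf{0}$'' to $\mathbf{L}\bx\ge\mathbf{0}$ (insuperability for \A), together with the routine normalization of the resulting nonzero nonnegative vectors to elements of the respective simplices. Everything else is an immediate consequence of the previous lemma and of Theorem~\ref{thm:unbeat:A+B}.
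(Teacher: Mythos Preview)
Your proposal is correct and follows essentially the same route as the paper. The only difference is packaging: the paper introduces a second, tailored form of Farkas' Lemma (the ``either $\mathbf{K}\bx\gg\mathbf{0}$ for some $\bx\ge\mathbf{0}$, or $\by^\adj\mathbf{K}\le\mathbf{0}$ for some $\by>\mathbf{0}$'' alternative) and applies it to $\mathbf{L}$ and to $-\mathbf{L}^\adj$, whereas you rebuild that same dichotomy from the first Farkas Lemma via the $\mathbf{b}=(-1,\dots,-1)^\adj$ trick already used in Theorem~\ref{thm:unbeat:A+B}. Your forward direction spells out explicitly (via the scalar $\by_*^\adj\mathbf{L}\bx_0$) what the paper dismisses as ``follows from the definition''; this is fine and arguably clearer.
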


This corollary follows from the following version of the Farkas' lemma:

\begin{lemma}[Farkas' lemma, alternative version \protect{\citep[Corollary 44]{Border2020}}]
\label{thm:yaa}
    Let $\mathbf{K}\in \R^{m, n}$. Exactly one of the following alternatives holds: i) Either there exists $\bx\in \R^n$, $\bx\geq\mathbf{0}$ satisfying $\mathbf{K}\bx\gg\mathbf{0}$, or else ii) there exists $\mathbf{y}\in\R^m$, $\mathbf{y}>\mathbf{0}$,  satisfying $\mathbf{y}^\adj\mathbf{K}\leq\mathbf{0}$.
\end{lemma}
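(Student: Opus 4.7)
The plan is to prove Corollary~\ref{cor:ubeat:AB} by translating the four notions (insuperable and strictly insuperable, for each player) into inequalities on the matrix $\mathbf{L} = \mathbf{A}^\adj - \mathbf{B}$, and then applying Lemma~\ref{thm:yaa} twice --- once to $\mathbf{L}$ and once to $-\mathbf{L}^\adj$ --- to obtain two complementary dichotomies that interlock.

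First I would collect the matrix reformulations. From the preceding lemma we already know that $\bx_*$ is insuperable for \A iff $\mathbf{L}\bx_* \geq \mathbf{0}$, and $\by_*$ is insuperable for \B iff $\by_*^\adj \mathbf{L} \leq \mathbf{0}$. A brief computation of the same kind shows that $\bx_*$ is \emph{strictly} insuperable for \A iff $\mathbf{L}\bx_* \gg \mathbf{0}$, and analogously $\by_*$ is strictly insuperable for \B iff $\by_*^\adj \mathbf{L} \ll \mathbf{0}$ (testing with $\by = \be_i$ for one direction and convex combining for the other). The simplex normalization used already in the proof of Theorem~\ref{thm:unbeat:A+B} lets me pass freely between $\bx \ge \mathbf{0}$ (or $\bx > \mathbf{0}$) solutions and elements of $\Delta^{n-1}$.

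Next I would apply Lemma~\ref{thm:yaa} with $\mathbf{K} = \mathbf{L}$: exactly one of (i) $\exists\, \bx \ge \mathbf{0}$ with $\mathbf{L}\bx \gg \mathbf{0}$, or (ii) $\exists\, \by > \mathbf{0}$ with $\by^\adj \mathbf{L} \le \mathbf{0}$, holds. Reading this through the translations above, this is \emph{exactly} the dichotomy ``\A has a strictly insuperable strategy XOR \B has an insuperable strategy''. Applying the same lemma to $\mathbf{K} = -\mathbf{L}^\adj$ yields the mirror dichotomy ``\B has a strictly insuperable strategy XOR \A has an insuperable strategy''.

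Finally I would combine these two dichotomies by case analysis. If either player has a strictly insuperable strategy, the exclusivity in the matching dichotomy forbids the \emph{other} player from having any insuperable strategy at all, and so no insuperable pair can exist. Conversely, if neither player has a strictly insuperable strategy, then the ``or'' branches of both dichotomies must fire, producing insuperable strategies for \A and for \B simultaneously --- i.e., an insuperable pair. This closes both implications. I do not expect a genuine obstacle here; the only slightly delicate point is the bookkeeping of the two Farkas applications and making sure the strict-versus-non-strict inequalities on $\mathbf{L}\bx$ and $\by^\adj \mathbf{L}$ are matched correctly to the two sides of each dichotomy.
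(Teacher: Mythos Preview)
Your proposal does not address Lemma~\ref{thm:yaa} at all: you \emph{assume} that lemma as a black box and use it to establish Corollary~\ref{cor:ubeat:AB}. The paper likewise does not prove Lemma~\ref{thm:yaa} --- it is simply quoted from \citet{Border2020} as a standard theorem of the alternative --- so there is no paper proof of the lemma to compare against. If the assigned statement really is the Farkas-type lemma itself, then what you have written is not a proof of it.

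If, on the other hand, your intended target was Corollary~\ref{cor:ubeat:AB}, then your argument is essentially the paper's own: both apply Lemma~\ref{thm:yaa} twice --- once with $\mathbf{K}=\mathbf{L}$ to obtain the exclusive dichotomy ``player \A has a strictly insuperable strategy XOR player \B has an insuperable strategy'', and once with the players' roles swapped --- and then read off both implications from the two dichotomies. Your choice $\mathbf{K}=-\mathbf{L}^\adj$ for the second application is the clean one that makes the translation transparent; the paper writes $\mathbf{K}=\mathbf{L}^\adj$, but the intended logic is identical.
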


\begin{proof}[Proof of Corollary~\ref{cor:ubeat:AB}]
The \emph{only if} part follows from the definition of insuperable and strictly insuperable strategies. The \emph{if} part is proved as follows: assume there are no strictly insuperable strategies. We apply Lemma~\ref{thm:yaa} for player \A and \B, with $\mathbf{K}=\mathbf{L}$ and $\mathbf{K}=\mathbf{L}^\adj$, respectively. By assumption, i) is false, then ii) should be true, i.e., there is an insuperable pair. 
\end{proof}

\section{Examples}
\label{sec:examples}

The public good game (PGG) discussed in the introduction is not in the framework of the previous section, as it is a $N$-player game, with $N$ possibly larger than 2. We will show, however, in Sec.~\ref{sec:reduction}, that the PGG is an example of a game that is equivalent to a two-player game. In the present section, we will be restricted to games originally and naturally defined as two-player.

\begin{example}[The hawk-dove game]
\label{ex:HD}
The hawk-dove game is a classic example in evolutionary game theory in which two individuals dispute a good, with value $G>0$, that cannot be split. There are two possible strategies: to fight until the last consequences, willing to pay the cost of fight $C>G$ (to play \textsc{Hawk}), or to avoid fight (to play \textsc{Dove}). The game is symmetric and the associated payoff matrix is given by $\mathbf{B}=\mathbf{A}=\left(\begin{smallmatrix} \frac{G-C}{2}&G\\ 0& \frac{G}{2}\end{smallmatrix}\right)$. The only symmetric Nash equilibrium (and the only evolutionarily stable state) is given by a mixed equilibrium in which strategy \textsc{Hawk} is played with frequency $G/C\in(0,1)$, either due to a probabilistic behavior of each individual in a homogeneous population, or due to the presence of $G/C$ players of the pure strategy \textsc{Hawk} and $1-G/C$ players of \textsc{Dove} type, or any convex combination of the two previous situations. However, $\mathbf{L}=\left(\begin{smallmatrix}0&-G\\G&0\end{smallmatrix}\right)$. In particular, $\bx=\be_1$ is insuperable for both players. This means that in a small population, the strategy \textsc{Hawk} has a larger than neutral probability of being selected. To understand this simple fact, consider a population of two individuals, disputing one indivisible good in a single iteration. There is no reason to play \textsc{Dove}. Who plays \textsc{Hawk} cannot lose, in the sense that an individual cannot have a lower payoff than the opponent: it will be tied against another \textsc{Hawk} and victorious against a \textsc{Dove}. Note that this does not depend on the cost of the fight $C$, that could be less than the possible gain $G$. In a small population, violence will be selected. See Fig.~\ref{fig:HD} for some numerical experiments.
\end{example}

\begin{figure}
    \centering
\includegraphics[width=0.45\textwidth]{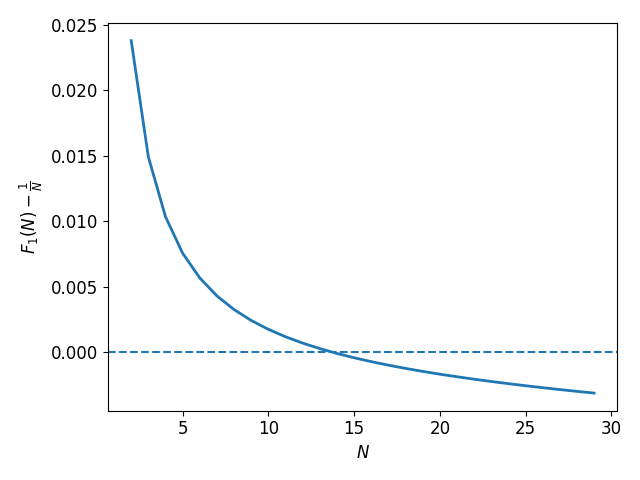}
\includegraphics[width=0.45\textwidth]{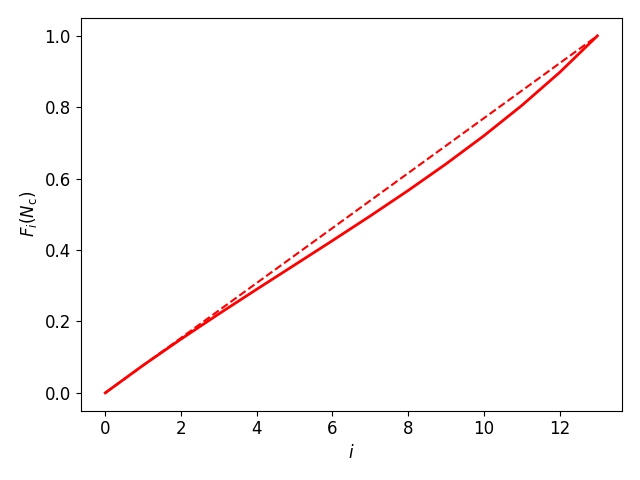}
\caption{We consider a hawk-dove game with $G=3$ and $C=10$, such that the equilibrium point is given by $\frac{3}{10}<\frac{1}{3}$. The last inequality, the so-called $1/3$-law guarantees that natural selection favors \textsc{Dove} to replace \textsc{Hawk} in a large finite  population~\citep{Nowak:06}. We assume the weak selection principle, i.e., payoff matrices are given by $\mathbf{B}=\mathbf{A}=\left(\begin{smallmatrix}1&1\\1&1\end{smallmatrix}\right)+\frac{1}{N}\left(\begin{smallmatrix}\frac{G-C}{2}&G\\ 0&\frac{G}{2}\end{smallmatrix}\right)$. (Left) The invasion probability of a single mutant as a function of the population size compared to the neutral probability $1/N$ as a function of $N$. For small $N$, \textsc{Hawks} have larger than neutral invasion probability and this reverts for $N> N_{\textrm{c}}=13$. (Right) Fixation probability of \textsc{Hawk} as a function of its initial presence $i$ in the case $N=N_{\textsc{c}}$. For $i=1$, the fixation probability (continuous line) is above the neutral (dashed line), however, the difference is of order $10^{-3}$ and is hardly noticeable. For $i\ge 2$, the continuous line is below the neutral one. The explicit expression for $F_1(N)$ is given by Eq.~\eqref{eq:fixacaoMoran}.}
\label{fig:HD}
\end{figure}

\begin{remark}
Evidence on the relationship between aggressiveness and population size is scarce. Frequently, individuals in small populations are closely related, as the reduction in population size may be due to the bottleneck effect and/or to the founder's effect~\citep{Hartle_Clark}, and high population relatedness correlates to lower intraspecific aggressiveness~\citep{Edenbrow_Croft_JFB2012}. This may compensate for the effect described above. In the case of human societies,~\citet{pinker2012better,Knauft_Daly_CA1987} argue that the level of violence in societies organized as modern states is significantly lower than those in nonstate, small-scale societies; cf.~\citet{Falk_Hildebolt} for a different point of view.
\end{remark}

It is not difficult to generalize the above case for any two-player, two-strategy, symmetric game. In this case, $\mathbf{A}=\left(\begin{smallmatrix}a&b\\ c&d\end{smallmatrix}\right)$, and, therefore $\mathbf{L}=\left(\begin{smallmatrix} 0 &c-b\\ b-c & 0\end{smallmatrix}\right)$. If $b>c$, $\be_1$ is insuperable; if $b<c$, $\be_2$ is insuperable. If $b=c$ any strategy is insuperable. More precisely, if $b>c$, a player that adopts strategy 1 can be certain that will not have a payoff smaller than player 2. In particular, in the symmetric two-player game with a payoff matrix given by $\mathbf{B}=\mathbf{A}=\left(\begin{smallmatrix}2&5\\ 4& 8\end{smallmatrix}\right)$, strategy 2 dominates strategy 1. Therefore, the Nash equilibrium is given by $\be_2$, but $\be_1$ is the unique insuperable strategy.

If the population is composed of two individuals, and there is no mutation, we may calculate the fixation probability of type \A individuals using the Moran process. The Moran process is a birth-death process, in which individuals are selected to die with equal probabilities and selected to reproduce (possibly the same one) with probability proportional to the fitness/payoff. Fitness of type \A and type \B individuals are given by $b$ and $c$, respectively. The fixation probability of the type \A individual in a population composed of one type \A individual and one type \B individual is given by weighting the possible paths of this process:
\[
F_\A=\underbrace{\frac{b}{2(b+c)}}_{(1,1)\mapsto (2,0)}\times 1+\underbrace{\frac{1}{2}}_{(1,1)\mapsto(1,1)}\times F_\A+ \underbrace{\frac{c}{2(b+c)}}_{(1,1)\mapsto(0,2)}\times 0\quad\Rightarrow\quad F_\A=\frac{b}{b+c}\ .
\]

We conclude that $F_\A>1/2$, the neutral fixation probability, if and only if $b>c$. This last inequality is equivalent to saying that \A is an $\textsc{ESS}_N$ for $N=2$. Furthermore, for general $N$, $b>c$, and assuming weak selection, the fixation probability of a single \A individual is larger than the neutral one, $1/N$, for $N$ small enough. See~\citet{Nowak:06} for further details.

Finally, in the case of Example~\ref{ex:HD}, $F_\A=1$, and the fixation of \textsc{Hawk} in an initially mixed population of two individuals is certain.

Rmk.~\ref{rem:largesmall} showed that in the PGG there is a sharp transition between a small population $N<r$ and a large population $N>r$. We show that for a general two-strategy two-player game, the dichotomy of small \emph{vs.} large populations is also present, but such a transition is not necessarily sharp.

\begin{theorem}
    Consider a symmetric two-strategy two-player game given by the positive payoff matrix $\mathbf{B}=\mathbf{A}=\left(\begin{smallmatrix} a& b\\ c& d\end{smallmatrix}\right)$ in which the fitness of each individual is equal to the average payoff after playing one game with every other player in the population. Assume a constant size population that evolves according to the frequency dependent Moran process. Assume that \B strictly dominates \A, and that \A is strictly insuperable. Then there are critical population sizes $N_{\mathrm{inf}}$ and $N_{\mathrm{sup}}$, with $N_{\mathrm{inf}}=\min\left\{\frac{d-a}{d-b},\frac{d-a}{c-a}\right\}\le\max\left\{\frac{d-a}{d-b},\frac{d-a}{c-a}\right\}=N_{\mathrm{sup}}$, such that, if $N<N_{\mathrm{inf}}$ ($N>N_{\mathrm{sup}}$), the fixation probability of any initial condition is larger (smaller, respect.) than the neutral one.
\end{theorem}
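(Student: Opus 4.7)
The plan is to reduce the claim to a sign analysis of the fitness difference and then apply the classical Moran fixation formula. Let $i$ denote the number of \A-individuals, and compute the frequency-dependent fitnesses by averaging payoffs over the $N-1$ co-players:
\begin{equation*}
\pi_\A(i) = \frac{(i-1)a + (N-i)b}{N-1}, \qquad \pi_\B(i) = \frac{ic + (N-i-1)d}{N-1}.
\end{equation*}
A short algebraic rearrangement yields the crucial identity
\begin{equation*}
(N-1)\bigl(\pi_\A(i) - \pi_\B(i)\bigr) = (d-a) - (N-i)(d-b) - i(c-a),
\end{equation*}
which is affine in $i$, with both coefficients $d-b$ and $c-a$ strictly positive by the strict-domination hypothesis.

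The key estimate is that $(N-i)(d-b) + i(c-a)$ is a non-negative combination of the positive numbers $d-b$ and $c-a$ with weights summing to $N$, hence
\begin{equation*}
N\min(d-b,\,c-a) \;\le\; (N-i)(d-b) + i(c-a) \;\le\; N\max(d-b,\,c-a).
\end{equation*}
Using the decomposition $d-a = (d-b) + (b-c) + (c-a) > 0$ (with $b>c$ supplied by strict insuperability), this immediately gives two regimes: $N < N_{\mathrm{inf}}$ forces $\pi_\A(i) > \pi_\B(i)$ for every $i \in \{1,\dots,N-1\}$, while $N > N_{\mathrm{sup}}$ forces $\pi_\A(i) < \pi_\B(i)$ for every such $i$.

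To pass from the sign of $\pi_\A - \pi_\B$ to the fixation probability itself, I would invoke the standard birth--death formula
\begin{equation*}
\rho_i = \frac{\sum_{k=0}^{i-1} a_k}{\sum_{k=0}^{N-1} a_k}, \qquad a_k = \prod_{j=1}^{k} \frac{\pi_\B(j)}{\pi_\A(j)}, \quad a_0 := 1,
\end{equation*}
where positivity of $\mathbf{A}$ ensures that every $\pi_\A(j)$ and $\pi_\B(j)$ is strictly positive. In the regime $N < N_{\mathrm{inf}}$ every factor $\pi_\B(j)/\pi_\A(j)$ lies strictly below $1$, so $(a_k)$ is strictly decreasing; an elementary averaging argument---each $a_k$ with $k<i$ strictly exceeds every $a_k$ with $k \ge i$, so the left block of the ratio has strictly larger mean than the right block---then gives $\rho_i > i/N$. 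The case $N > N_{\mathrm{sup}}$ is symmetric and yields $\rho_i < i/N$.

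The main obstacle is really just the algebraic rewriting of $\pi_\A - \pi_\B$; once the identity is in hand, the convex-combination bound is immediate and the fixation comparison is a classical consequence of monotonicity of $(a_k)$. I would also remark that the thresholds obtained this way are not sharp---a finer endpoint-by-endpoint analysis of the affine map $i \mapsto \pi_\A(i) - \pi_\B(i)$ would improve the window to $N < 2 + (b-c)/\max(d-b,c-a)$ and analogously above---but the crude max/min estimate furnishes precisely the $N_{\mathrm{inf}}$ and $N_{\mathrm{sup}}$ stated in the theorem.
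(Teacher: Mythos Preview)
Your proposal is correct and follows essentially the same route as the paper: both compute the affine function $\Psi_N(k)=(d-a)-(N-k)(d-b)-k(c-a)$ governing the sign of $\pi_\A-\pi_\B$, use an endpoint/convex-combination bound to force $\rho_k>1$ (resp.\ $<1$) uniformly in $k$ below $N_{\mathrm{inf}}$ (resp.\ above $N_{\mathrm{sup}}$), and then invoke the Moran fixation formula together with the ``mean of a decreasing sequence'' argument to compare $F_i$ with $i/N$. Your convex-combination estimate replaces the paper's case split on the sign of $(d-b)-(c-a)$, which is a minor streamlining but not a different idea.
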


\begin{proof}
The fixation probability of \A individuals, when their initial presence is given by $i$, is given by $F_0=0$ and, for $i>0$, by:
\begin{equation}\label{eq:fixacaoMoran}
F_i=\frac{\sum_{j=1}^i\prod_{k=0}^{j-1}\rho_k^{-1}}{\sum_{j=1}^{N}\prod_{k=0}^{j-1}\rho_k^{-1}}\ ,
\end{equation}
where the \emph{relative fitness} is given by
\begin{equation}\label{eq:relativefitness}
\rho_k=\frac{a(k-1)+b(N-k)}{ck+d(N-k-1)}\ .
\end{equation}
For a derivation of this expression, see~\citet{KarlinTaylor_First,TaylorFudenbergSasakiNowak,Nowak:06}.
 The relative fitness is the ratio between the average fitness of type \A individuals and type \B individuals after playing one game with every possible pair.

From the dominance by \B and the insuperability of \A, we have that, $d>b>c>a$. Note that 
\[
\rho_k>1\Leftrightarrow \Psi_N(k)\bydef [(d-b)-(c-a)]k-(d-b)N+d-a>0\ .
\]
Assume, initially that $d-b>c-a$; in this case, $N<\frac{d-a}{d-b}$ implies $\Psi_N(0)=d-a-N(d-b)>0$, and, therefore $\Psi_N(k)>0$ for all $k$, i.e., $\rho_k>1$ for all $k$. 

Let
\begin{equation*}
a_i \bydef \prod_{j=0}^{i-1}\rho_j^{-1} ,\qquad
A_i \bydef  \frac{1}{i}\sum_{j=1}^ia_j\ .
\end{equation*}
for $i=1,\dots,N$.
Then $A_i>A_{i+1}$, since it is the arithmetic average of the first $i$ elements of the decreasing sequence $\left(a_i\right)_{i=1,\dots,N}$. Finally, Eq.~\eqref{eq:fixacaoMoran} can then be recast as  
\begin{align*}
F_i = \frac{i}{N}\frac{A_i}{A_N} > \frac{i}{N}.
\end{align*}

If $N>\frac{d-a}{c-a}$, then $\Psi_N(N)=-(c-a)N+d-a<0$ and $\Psi_N(j)<0$ for all $j$. With a similar argument, we conclude that $F_i>\frac{i}{N}$. Note that $d-b>c-a$ implies that $\frac{d-a}{d-b}<\frac{d-a}{c-a}$. The case in which $d-b<c-a$ is similar. If $d-b=c-a$, fitness difference $\Psi_N(k)$ is constant. It is clear that $d-a>d-b$, and for $N$ small (large) $d-a>(d-b)N$ ($d-a<(d-b)N$, respect.). The result follows immediately from the fact that $\rho_k$ is always larger than 1 for small $N$ and smaller than 1 for large N.
 \end{proof}

\begin{remark}
    The difference $N_{\mathrm{sup}}-N_{\mathrm{inf}}$  can be seen as a measure of how sharp is the small \emph{vs.} large transition. In this sense, the PGG is maximally sharp, cf.~Rmk~\ref{rem:largesmall}.
\end{remark}

\begin{example}[Ultimatum game]
    Consider the \emph{ultimatum game}, an example of an asymmetric two-player game. Player
\A (donor) makes an offer, from 0 to $M$ euros (for the sake of simplicity, let us consider only integer offers) to player \B (receiver), which can accept the
offer or not. In the first case, the split is made accordingly; in the second case, both players do not receive anything. We will restrict strategies of player \B to ``$\ge m$'', i.e., ``accept offers greater or equal to $m$, reject otherwise''. The strategy ``$\ge 0$'' means ``always accept'' and ``$\ge M+1$'' means ``always reject''. More complex strategies could be defined but will not be considered here. The strategy $m$ of player \A consists of the offer of $m$ euros. We write the matrices $\mathbf{A}$ and $\mathbf{B}$ for $M=4$ (generalization is straightforward). We use the bi-matrix notation, $(a_{ij},b_{ji})$ to identify player \A and \B payoffs, when \A plays strategy $i$ and \B plays strategy $j$:
\begin{center}
\begin{tabular}{c|cccccc}
\A $\backslash$ \B & $\ge 0$ & $\ge 1$ & $\ge 2$ & $\ge 3$ & $\ge 4$ & $\ge 5$\\
\hline
0 & $(4,0)$ & $(0,0)$ & $(0,0)$ & $(0,0)$ & $(0,0)$ & $(0,0)$ \\
1 & $(3,1)$ & $(3,1)$ & $(0,0)$ & $(0,0)$ & $(0,0)$ & $(0,0)$\\
2 & $(2,2)$ & $(2,2)$ & $(2,2)$ & $(0,0)$ & $(0,0)$ & $(0,0)$ \\
3 & $(1,3)$ & $(1,3)$ & $(1,3)$ & $(1,3)$ & $(0,0)$ & $(0,0)$ \\
4 & $(0,4)$ & $(0,4)$ & $(0,4)$ & $(0,4)$ & $(0,4)$ & $(0,0)$
\end{tabular} 
\end{center}
From the above table, it is clear that all pairs of the form $(m,\ge m)$, $m=0,\dots,M$, $(0,\ge M)$, and $(0,\ge M+1)$ are Nash equilibria.  Furthermore, any pair $(m,\ge m')$, with $m\le \lfloor M/2\rfloor $, and $m'\ge \lceil M/2\rceil $ is insuperable. $\lfloor x \rfloor$ and $\lceil x \rceil$ are the largest integer smaller or equal to $x$ and the smaller integer larger or equal to $x$, respectively. An insuperable pair with a positive money transfer from \A to \B exists only if $M$ is even, and is given by $(M/2,\ge M/2)$. See Fig.~\ref{fig:ultimatum}
\end{example}

\begin{figure}
    \centering
\includegraphics[width=0.32\textwidth]{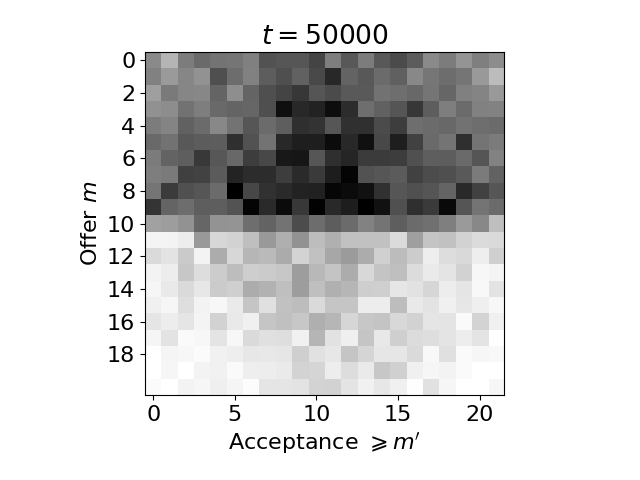}
\includegraphics[width=0.32\textwidth]{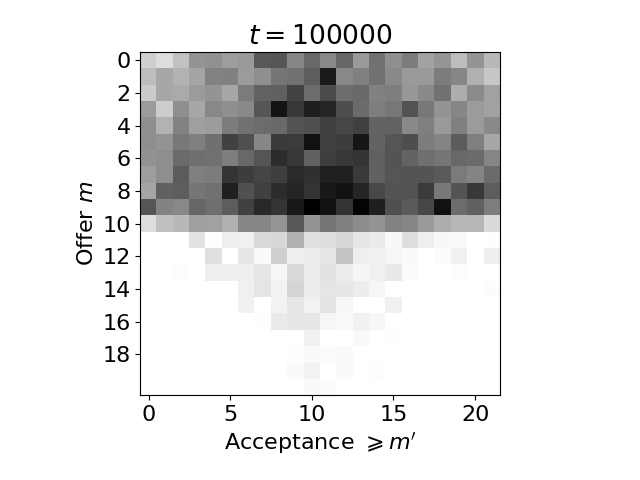}
\includegraphics[width=0.32\textwidth]{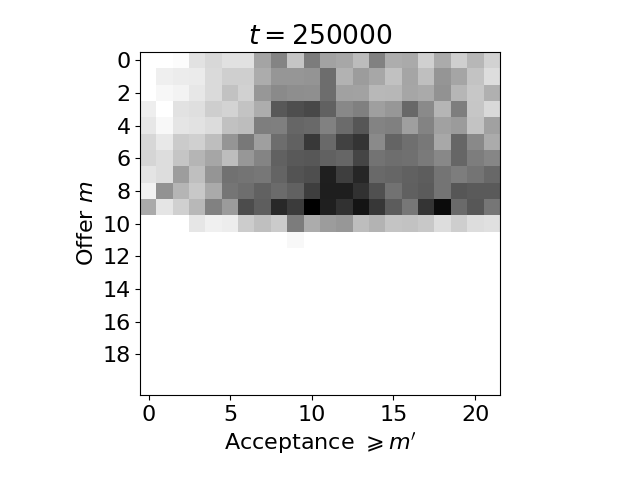}\\
\includegraphics[width=0.32\textwidth]{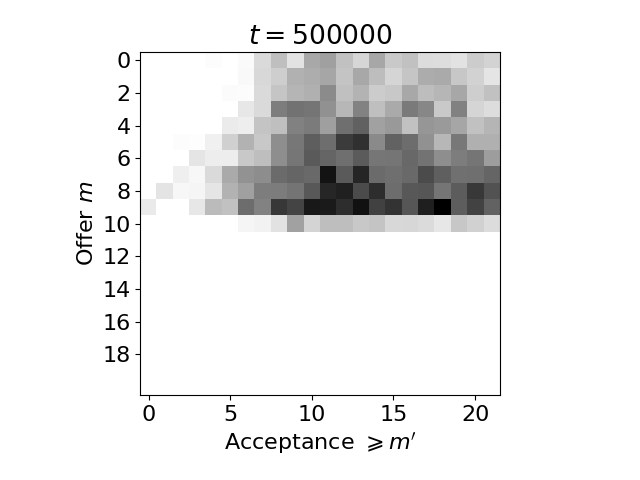}
\includegraphics[width=0.32\textwidth]{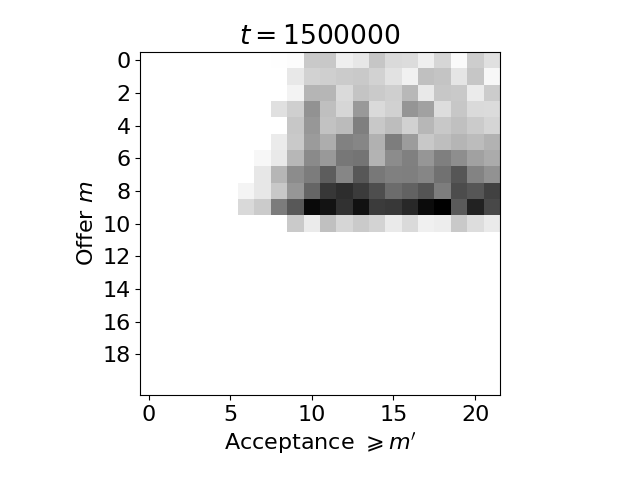}
\includegraphics[width=0.32\textwidth]{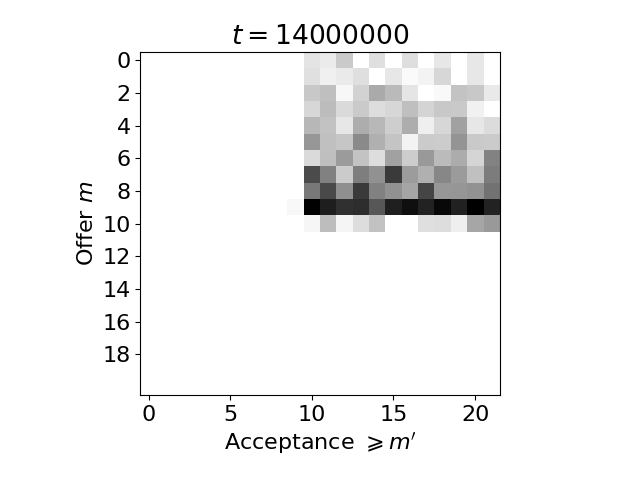}
\caption{Consider a population of $21\times 20\times 100$ individuals initially distributed homogeneously. Each one has a strategy $(m,\ge m')$, for $m=0,\dots,20$, $m'=0\dots,21$. At each time step, two individuals are selected at random to play the Ultimatum game with $M=20$. The one with the higher payoff replaces the one with the smaller payoff. If payoffs are equal, one replaces the other with probability $1/2$. Note that initially all strategists who make high offers are eliminated; after a certain time, the ones who accept lower offers are also eliminated. In the end, only strategists of the form $(m,\ge m')$, $m\le10$, $m'\ge 10$ will survive. Eventually, all encounters among surviving strategists are evolutionarily neutral.}
\label{fig:ultimatum}
\end{figure}

\begin{remark}
 Most experimental results indicate that real players make offers close, but inferior, to the fair share of 50\%~\citep{Croson_1996,NowakPageSigmund,camerer2011behavioral}, i.e., they are between strict Nash equilibrium and the insuperable pair. Experiments in several small-scale societies show a strong connection between the game's result and the player's cultural background. Most of the offers were closer to the insuperable pair described above than to the subgame perfect equilibrium (``to offer the minimum", ``to accept anything") predicted by theorists~\citep{Henrich_etal_2024,Gintis}. 
 \end{remark}

From the above examples, it is clear that, in two-strategy games, insuperable strategies are either a pure strategy or all possible strategies. In the following example, we show a two-player three-strategy symmetric game, in which unique Nash equilibrium is given by a pure strategy, but the unique insuperable strategy is a mixed one.

\begin{example}
Consider a symmetric two-player, three-strategy game with the payoff matrix given by
\[
\mathbf{B}=\mathbf{A}=\left(\begin{matrix}
			1 & 1 & 4 \\
			2 & 1 & 1 \\
			3 & 2 & 5
			\end{matrix}\right)\quad\Rightarrow\quad
\mathbf{L}=\left(\begin{matrix}
			0 & 1 & -1\\
			-1 & 0 & 1 \\
			1 & -1 & 0 \end{matrix}\right)\ .
\]

Strict Nash equilibrium is given by $\be_3$, while an insuperable strategy is given by $\bx_*=(1/3,1/3,1/3)$. Assuming, $\bx_*=(x,y,z)\in\Delta^2$, this last fact follows from 
\[
\mathbf{L}\bx\ge\mathbf{0}\quad\Rightarrow\quad x\ge y\ge z\ge x\quad\Rightarrow\quad x=y=z=\frac{1}{3}\ .
\]

The maximum possible payoff when both players use the same strategy is strictly obtained by the Nash equilibrium strategy $\be_3$. However, $\be_3$ is defeated by $\be_1$, which is defeated by $\be_2$, which is defeated by $\be_3$. Namely $\be_3^\adj\mathbf{A}\be_1=3<4=\be_1^\adj\mathbf{A}\be_3$, $\be_1^\adj\mathbf{A}\be_2=1<2=\be_2^\adj\mathbf{A}\be_1$, and $\be_2^\adj\mathbf{A}\be_3=1<2=\be_3^\adj\mathbf{A}\be_2$. Furthermore, each pure strategy ties with the insuperable strategy, i.e, $\be_1^\adj\mathbf{A}\bx_*=\bx_*^\adj\mathbf{A}\be_1=2$, $\be_2^\adj\mathbf{A}\bx_*=\bx_*^\adj\mathbf{A}\be_2=\frac{4}{3}$, and $\be_3^\adj\mathbf{A}\bx_*=\bx_*^\adj\mathbf{A}\be_3=\frac{10}{3}$.  
Therefore, a Nash-equilibrium player can be outperformed by different strategists, while an insuperable player cannot.

\end{example}

The next example is such that only \B has an insuperable strategy, which, in light of Corollary~\ref{cor:ubeat:AB}, it will be strictly insuperable.

\begin{example}
    Consider the matrix $\mathbf{L}$ given by
    \[
\mathbf{L}=
\begin{pmatrix}
    1&-10\\
    -10&1
\end{pmatrix}\ .
    \]
For any vector $(\alpha,\beta)$, we find
\[
\mathbf{L} \begin{pmatrix}
    \alpha\\\beta
\end{pmatrix}
= \begin{pmatrix} 
\alpha-10\beta\\-10\alpha + \beta 
\end{pmatrix}\ ,
\]
which cannot be non-negative for any choice of $(\alpha,\beta)\geq0$.

On the other hand, 
\[
\begin{pmatrix}
    1&1
\end{pmatrix}
\mathbf{L}=
\begin{pmatrix}
    -9&-9
\end{pmatrix}\ .
\]
Therefore, in a non-symmetric game with such an $\mathbf{L}$ matrix, player \B has an insuperable strategy, but player \A does not.
\end{example}

Finally, we study a classical paradox in economics, introduced in \citet{Selten:78}, which turns out to be an asymmetric $N$-player game. Here, we will consider the case $N=2$. 

\begin{example}[Chain store paradox]
 There is a monopolist (player \A) and a potential competitor (player \B) trying to enter the market; the competitor has two strategies: \textsc{out} (strategy 1) and \textsc{in} (strategy 2), whereas the monopolist has also two strategies: \textsc{cooperate} (\textsc{c}, strategy 1) and \textsc{dispute} (\textsc{d}, strategy 2), should the potential competitor chose \textsc{in}. The payoffs are then
\begin{displaymath}
    \mathbf{A}=\begin{pmatrix}
        5&2\\
        5&0
    \end{pmatrix}
\quad\text{and}\quad
    \mathbf{B}=\begin{pmatrix}
        1&1\\
        2&0
    \end{pmatrix}\ .
\end{displaymath}
The Nash equilibrium for this game is $(\textsc{d},\textsc{out})$, whereas there exists a continuum of insuperable pairs in this game. Indeed, 
\begin{displaymath}
    \mathbf{L}=
    \begin{pmatrix}
        4&4\\
        0&0
    \end{pmatrix}\ .
\end{displaymath}
Hence, any convex combination \textsc{c} and \textsc{d} is an insuperable strategy for \A, while \textsc{in} is an insuperable strategy for \B. Within these pairs, the more profitable is (\textsc{c},\textsc{in}) and this agrees with the analysis coming from subgame perfect equilibrium for a sequential game playing in different (say) cities. This suggests that this choice can be also justified if one assumes that each branch has autonomy in deciding its policy against the local potential competitor, and since locally the game is played only once, the local management chooses the insuperable strategy \citet{Selten:78,rosenthal1981games}; see also \citep{milgrom1982predation} for an alternative view.
\end{example}

\section{A game theoretical view of finance}
\label{sec:finance}

One-period discrete markets --- the so-called Arrow-Debreu model  \citep{duffie2010dynamic} --- and their multi-period extensions are the simplest class of market models used for developments in quantitative finance \citep{pliska1997introduction,follmer2011stochastic}.  In what follows, we will revisit this setup from a game theoretical standpoint --- more specifically, from the perspective of insuperable strategies developed above. Before embarking on this task, we recall the classical setup following~\citet{laurence2017quantitative}.

A one-period market with $m$ assets and $n$ possible states is characterized by a cash flow matrix $\mathbf{D} \in \R^{m,n}$ and by a price vector $\mathbf{p}\in \R^m$. While the latter specifies the price of the $m$ assets at time $t=0$, the former specifies the corresponding cash flows at time $t=1$ depending on the state assumed by the market.
We also write $\btheta\in\R^m$ to denote an allocation of such $m$ assets. More precisely,  if $\theta_i\not=0$ then the investor has taken a position with respect to the $i$-th asset: If $\theta_i>0$, the investor has bought $\theta_i$ unities of the $i$th asset, and we say that the investor is long on this asset; on the other hand, if $\theta_i<0$ the investor has sold $\theta_i$ unities of this asset without owning this asset --- in this case, we say the investor is short on this asset. In particular, $\theta_i<0$ involves the concept of short-selling an asset \citep{hull2006options,luenberger1998investment}.

In order to be useful, a market model must not allow for an allocation that ensures profit at zero cost. In a market without arbitrage, a zero-cost position that yields profit in some states is possible, but it could also lead to losses in other states of the economy. The precise definition under our framework is as follows: 

\begin{definition}[arbitrage strategy]
	An arbitrage is a portfolio $\btheta$ satisfying
	\begin{displaymath}
		\btheta^T \mathbf{D}>0
		\quad\text{and}\quad
		\btheta\cdot\mathbf{p}\leq 0.
	\end{displaymath}
or
	\begin{displaymath}
	\btheta^T \mathbf{D}\geq0
	\quad\text{and}\quad
	\btheta\cdot\mathbf{p}< 0.
\end{displaymath}
\end{definition}

An arbitrage portfolio is then either a portfolio with at most a zero cost, with no downside and a possibility of a positive gain, or a portfolio with a negative cost (the trader makes an immediate profit on assembling such a portfolio) and without a downside. The next theorem characterizes markets without arbitrages (see \citet{laurence2017quantitative} for a proof):

\begin{theorem}[Fundamental theorem of assets]
	There exist no arbitrage portfolios in our one-period market if and only if there exists a positive vector $\bpi\in\R^n$, such that $\mathbf{p}=\mathbf{D}\bpi$; the vector $\bpi$ is termed a state price vector.
\end{theorem}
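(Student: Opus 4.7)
The plan is to leverage exactly the Farkas-type machinery that underlies Theorem~\ref{thm:unbeat:A+B} and Corollary~\ref{cor:ubeat:AB}.

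The easy direction is a one-line computation. If $\mathbf{p}=\mathbf{D}\bpi$ for some $\bpi\gg\mathbf{0}$, then every portfolio satisfies $\btheta\cdot\mathbf{p}=\btheta^\adj\mathbf{D}\bpi$. Since $\bpi\gg\mathbf{0}$, $\btheta^\adj\mathbf{D}>\mathbf{0}$ forces $\btheta\cdot\mathbf{p}>0$ and $\btheta^\adj\mathbf{D}\ge\mathbf{0}$ forces $\btheta\cdot\mathbf{p}\ge 0$, so neither clause of the arbitrage definition can hold.

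For the hard direction, I would package arbitrage as a single sign condition on one linear map. Stacking the cash flow and the negative cost,
\[
\mathbf{M}\bydef\begin{pmatrix}\mathbf{D}^\adj\\ -\mathbf{p}^\adj\end{pmatrix}\in\R^{(n+1)\times m},\qquad \mathbf{M}\btheta=\begin{pmatrix}(\btheta^\adj\mathbf{D})^\adj\\ -\btheta\cdot\mathbf{p}\end{pmatrix},
\]
a direct case split on the two clauses of the arbitrage definition shows that $\btheta$ is an arbitrage if and only if $\mathbf{M}\btheta>\mathbf{0}$ in $\R^{n+1}$ in the paper's notation. Hence the no-arbitrage hypothesis is equivalent to the statement that no $\btheta\in\R^m$ (sign-unrestricted) produces $\mathbf{M}\btheta>\mathbf{0}$; equivalently, the linear subspace $\mathbf{M}(\R^m)$ meets the nonnegative orthant of $\R^{n+1}$ only at the origin. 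Invoking Stiemke's lemma (the companion of Lemma~\ref{thm:yaa} that pairs ``$\exists\,\btheta$ with $\mathbf{M}\btheta>\mathbf{0}$'' against ``$\exists\,\bq\gg\mathbf{0}$ with $\mathbf{M}^\adj\bq=\mathbf{0}$''), the failure of the first alternative yields some $\bq\gg\mathbf{0}$ in $\R^{n+1}$ with $\mathbf{M}^\adj\bq=\mathbf{0}$. Writing $\bq=(\bpi,\lambda)$ with $\bpi\gg\mathbf{0}$ in $\R^n$ and $\lambda>0$, the equation reduces to $\mathbf{D}\bpi=\lambda\mathbf{p}$, and $\bpi/\lambda\gg\mathbf{0}$ is the required state price vector.

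The main obstacle is that the precise alternative I need (Stiemke) is not among the two Farkas variants explicitly cited in the paper. I would either derive it from Corollary~36 by the standard sign-splitting trick, writing $\btheta=\btheta^+-\btheta^-$ with $\btheta^\pm\ge\mathbf{0}$ and introducing slack variables to encode the ``$>\mathbf{0}$'' requirement as a feasibility test in nonnegative variables, or appeal directly to the separating hyperplane theorem applied to the linear subspace $\mathbf{M}(\R^m)$ and the pointed closed convex cone $\R_+^{n+1}$, which immediately produces the strictly positive annihilating functional $\bq$. Either route bypasses the need to introduce the dual problem explicitly and keeps the proof within the Farkas-alternative idiom already used throughout the paper.
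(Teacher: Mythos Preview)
Your proof is correct, but there is nothing to compare it against: the paper does not prove this theorem. It states the Fundamental Theorem of Assets with an explicit pointer ``see \citet{laurence2017quantitative} for a proof'' and then moves on to use it. So your write-up is not an alternative to the paper's argument; it is a proof where the paper chose to cite one.

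On the substance: the easy direction is fine. For the hard direction, your packaging of the two arbitrage clauses into the single condition $\mathbf{M}\btheta>\mathbf{0}$ (in the paper's semipositive sense) is exactly right, and the case split you describe does establish the equivalence cleanly. The alternative you then invoke---no $\btheta$ with $\mathbf{M}\btheta>\mathbf{0}$ forces the existence of $\bq\gg\mathbf{0}$ with $\mathbf{M}^\adj\bq=\mathbf{0}$---is Stiemke's lemma, as you say, and it is the standard tool here. Your honesty that this is \emph{not} literally one of the two Farkas variants the paper quotes is well placed: Lemma~\ref{thm:yaa} pairs ``$\mathbf{K}\bx\gg\mathbf{0}$, $\bx\ge\mathbf{0}$'' against ``$\by^\adj\mathbf{K}\le\mathbf{0}$, $\by>\mathbf{0}$'', which has both the wrong sign restriction on $\bx$ and the wrong strictness pattern for your purpose. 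The sign-splitting derivation you sketch (write $\btheta=\btheta^+-\btheta^-$ and apply the cited alternatives to the doubled matrix) does work and keeps everything self-contained; the separating-hyperplane route is equally standard. Either way, once you have $\bq=(\bpi,\lambda)\gg\mathbf{0}$ with $\mathbf{D}\bpi=\lambda\mathbf{p}$, the rescaling $\bpi/\lambda$ finishes the job, and you have correctly read ``positive vector'' in the statement as $\bpi\gg\mathbf{0}$, without which the easy direction would fail.
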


In our framework, we shall have an additional conical restriction to the market model: any portfolio vector $\btheta$ must satisfy $\btheta\geq0$. 

We are ready to recast the one-period market model above as a non-symmetric game: the cash flow matrix $\mathbf{D}$ will be identified with the net payoff matrix $\mathbf{L}$. Player \A will represent the trader, whereas player \B will represent the market --- a strategy $\bx$ for \A will be a portfolio allocation, while a strategy $\by$ for \B will be a choice of a relaxed state price vector, i.e., $\by>0$ instead of $\by\gg0$ --- with $\mathbf{p}^\adj=-\by^\adj \mathbf{L}$. In this framework, an arbitrage portfolio is such that $L\bx>0$ and $\bx^\adj \mathbf{p}=0$ or  $\textbf{L}\bx\geq0$ and $\bx^\adj \mathbf{p}<0$.

\begin{definition}
    We say that an insuperable strategy has trivial outcome for \A (\B) if $\textbf{L}\bx=\mathbf{0}$ ($\by^\adj \textbf{L}=\mathbf{0}$, respect.). 
\end{definition}

We conclude that section with a necessary and sufficient condition to the existence of arbitrage portfolio, stated in terms of insuperable strategies.

\begin{theorem}
    There exists no arbitrage portfolio in our conical market if and only if every insuperable strategy for the trader has trivial outcome.
\end{theorem}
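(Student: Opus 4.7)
The plan is to reformulate both sides of the biconditional as intrinsic inequality conditions on $\mathbf{L}$ and then verify the equivalence by direct case analysis of the arbitrage definition. By positive scaling, ``every insuperable strategy for \A has trivial outcome'' is equivalent to the non-existence of $\bx \geq \mathbf{0}$ with $\mathbf{L}\bx > \mathbf{0}$: a non-trivially insuperable strategy furnishes such an $\bx$, and conversely any such $\bx$ normalises to one in $\Delta^{n-1}$.

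For the forward direction, assume no arbitrage, where the market's strategy $\by > \mathbf{0}$ fixes $\mathbf{p}^\adj = -\by^\adj \mathbf{L}$. For any insuperable $\bx_*$, we have $\mathbf{L}\bx_* \geq \mathbf{0}$ and hence $\bx_*^\adj \mathbf{p} = -\by^\adj \mathbf{L}\bx_* \leq 0$. If $\mathbf{L}\bx_* \neq \mathbf{0}$, then $\bx_*$ satisfies either the first arbitrage condition ($\mathbf{L}\bx_* > \mathbf{0}$ with $\bx_*^\adj \mathbf{p} = 0$) or the second ($\mathbf{L}\bx_* \geq \mathbf{0}$ with $\bx_*^\adj \mathbf{p} < 0$), contradicting the hypothesis. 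Hence $\mathbf{L}\bx_* = \mathbf{0}$.

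For the reverse direction, assume every insuperable has trivial outcome, so no $\bx \geq \mathbf{0}$ satisfies $\mathbf{L}\bx > \mathbf{0}$. The first arbitrage condition is excluded outright. For the second, any candidate $\bx \geq \mathbf{0}$ with $\mathbf{L}\bx \geq \mathbf{0}$ must have $\mathbf{L}\bx = \mathbf{0}$ (else $\mathbf{L}\bx > \mathbf{0}$, excluded), and then $\bx^\adj \mathbf{p} = -\by^\adj \mathbf{L}\bx = 0$, contradicting $\bx^\adj \mathbf{p} < 0$. If one wishes to guarantee the existence of a valid market strategy $\by > \mathbf{0}$ (not assumed given), Lemma~\ref{thm:yaa} applied with $\mathbf{K} = \mathbf{L}$ supplies such a $\by$ with $\by^\adj \mathbf{L} \leq \mathbf{0}$, since the alternative ($\bx \geq \mathbf{0}$ with $\mathbf{L}\bx \gg \mathbf{0}$) is a fortiori excluded by the hypothesis.

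The main delicate point is purely combinatorial: keeping the three positivity conventions $\geq$, $>$, $\gg$ straight across the insuperable, trivial-outcome, and arbitrage conditions. Once the trivial-outcome assumption is recast as ``no $\bx \geq \mathbf{0}$ with $\mathbf{L}\bx > \mathbf{0}$,'' both implications reduce to inspection of the two arbitrage cases, with the duality theorems of the previous section intervening only to construct the market's state-price strategy in the reverse direction if it is not already given.
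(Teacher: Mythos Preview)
Your proof is correct and follows essentially the same approach as the paper's: both directions reduce to checking the two arbitrage cases against the identity $\bx^\adj\mathbf{p}=-\by^\adj\mathbf{L}\bx$. Your forward implication is slightly more streamlined than the paper's, which first rules out \emph{strictly} insuperable trader strategies, then invokes the alternative Farkas lemma (Lemma~\ref{thm:yaa}) to manufacture an insuperable market $\by$ before deriving the contradiction, whereas you work directly with the market's given $\by>\mathbf{0}$ and bypass the Farkas detour entirely.
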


\begin{proof}
$\Leftarrow$ Assume that every insuperable strategy for the trader has trivial outcome. In this case, the first possibility for an arbitrage is ruled out. Furthermore, if $\textbf{L}\bx=0$, then $\textbf{p}^\adj\bx=0$, which also rules out the second possibility, and thus there is no arbitrage. $\Rightarrow$ Assume there is no arbitrage. In this case, there is no strictly insuperable strategy for the trader --- otherwise, it would be an arbitrage portfolio. Thus the market does have an insuperable strategy. Let $\by$ be that strategy. Assume that the trader has an allocation $\bx$ with $\mathbf{L}\bx>0$, then $\mathbf{y}^\adj \mathbf{L}\bx=0$, and we would have an arbitrage: A portfolio with possibility of profit at zero-cost. Hence for every allocation $\bx$, we must have  $\mathbf{L}\bx\geq0$, which implies either that the trader does not have an insuperable strategy or that its insuperable strategy has a trivial outcome.
\end{proof}

\section{$N$-player games that are reducible to 2-player games}
\label{sec:reduction}

The discussion on public good games in the introduction shows the need to extend the concept of insuperable strategies for symmetric multi-player games. In this section, we will explore two examples of a two-strategy multi-player game that can be systematically reduced to a two-player game, where the discussion in Sec.~\ref{sec:deff} applies. 

The payoff of a two-strategy $N$-player game is defined by  $2N$ parameters, 
\[
\ba^{(N)}=\left(a^{(N)}_k\right)_{k=0,\dots,N-1}\ ,\qquad \bb^{(N)}=\left(b^{(N)}_k\right)_{k=0,\dots,N-1}\ , 
\]
where $a^{(N)}_k$ and $b^{(N)}_k$ are the payoffs of a type \A and a type \B players, respectively, in a $N$-player game against $k$ type \A players and $N-k-1$ type \B players. Strategy \A will be insuperable if $a^{(N)}_k\ge b^{(N)}_{k+1}$ for all possible values of $k$, and \B will be insuperable if $a^{(N)}_k\le b^{(N)}_{k+1}$.

On the other hand, we say that strategy \A dominates strategy \B if $a^{(N)}_k\ge b^{(N)}_k$, and \B dominates \A if $b^{(N)}_k\ge a^{(N)}_k$, in both cases for $k=0,\dots,N-1$.

\begin{remark}
    It is usual to write payoffs of two-strategy $N$-player games in the following table:
    \begin{center}
        \begin{tabular}{c|ccccc}
        & 0-\A & 1-\A & 2-\A & \dots & $(N-1)$-\A\\
        \hline
        \A & $a^{(N)}_0$ & $a^{(N)}_1$ & $a^{(N)}_2$ & \dots & $a^{(N)}_{N-1}$\\
        \B & $b^{(N)}_0$ & $b^{(N)}_1$ & $b^{(N)}_2$ & \dots & $b^{(N)}_{N-1}$
        \end{tabular}\ .
    \end{center}
    In this case, dominance relations are obtained by comparing vertical lines, while insuperability is verified through comparison of $- 45^{\mathrm{o}}$ diagonal lines, i.e. the comparison between $a^{(N)}_k$ and  $b^{(N)}_{k+1}$.
\end{remark}

We introduce the concept of game reduction by saying that the $N$-player game given by $\ba^{(N)}$ and $\bb^{(N)}$ is reducible to a $2$-player game if there are $2$ two-vectors $\ba^{(2)}$ and $\bb^{(2)}$ such that the result of the $N$-player game, for any particular composition of strategists, is equal to the average result a given player would receive, playing separately $N-1$ rounds, each round against a different opponent.

More precisely, the game given by $(\ba^{(N)},\bb^{N})$ is reducible to a two-player game if there are 2 two-vectors $\ba^{(2)}$ and $\bb^{(2)}$ such that 
\begin{align}
\label{eq:reducAtwo}
a^{(N)}_k&=\frac{1}{N-1}\left[ka^{(2)}_{1}+(N-k-1)a^{(2)}_{0}\right]\ ,\\
\label{eq:reducBtwo}
b^{(N)}_k&=\frac{1}{N-1}\left[kb^{(2)}_{1}+(N-k-1)b^{(2)}_{0}\right]\ .
\end{align}
We note that a necessary condition such that the reducibility to a two-player game holds is that the payoff is an affine function with respect to $k$.

For $N=3$, this reduces to
\begin{align}
\label{eq:reducibility0}
    a_0^{(3)}&=a_0^{(2)}\ ,\\
\label{eq:reducibility2}
    a_2^{(3)}&=a_1^{(2)}\ ,\\
\label{eq:reducibility1}
    a_1^{(3)}&=\frac{1}{2}a_0^{(2)}+\frac{1}{2}a_1^{(2)}= \frac{a_0^{(3)}+a_2^{(3)}}{2}\ ,
\end{align}
and the same relations for $\bb^{(3)}$ and $\bb^{(2)}$.

It is clear that if \A dominates \B in the three-player game, then \A will dominate \B in the two-player game and vice-versa. In general, reducibility does not preserve insuperability. However, if $b^{(3)}_2\geq a^{(3)}_2$ and \A is insuperable, then in the reduced game, $b^{(2)}_1\geq a^{(2)}_1$, and, more interestingly, \A is still insuperable:
\begin{equation}\label{eq:reduc_unbeat}
b_2^{(3)}\le a_1^{(3)}=\frac{a_0^{(3)}+a_2^{(3)}}{2}
\quad\Rightarrow\quad a_0^{(2)}=a_0^{(3)}\ge 2b_2^{(3)}-a_2^{(3)}\ge b_2^{(3)}=b_1^{(2)}\ .
\end{equation}
In particular, if \B dominates \A and \A is insuperable in the three-player game, then the same will be true in the reduced game (assuming that it exists).

The converse is not necessarily true, however; i.e., if a two-player game is equivalent to a three-player game, and \B dominates \A in the three-player game, then \B dominates \A in the two-player game. On the other hand, the insuperability of \A in the two-player game will not necessarily be true in the three-player game, but see Rmk.~\ref{rem:extremes}.

The next example will show a well-known multiplayer game, such that \A is insuperable and \B is dominant.

\begin{example}
Consider the $N$-player public good game ($N$-PGG), with multiplicative parameter $r>0$. We consider that \A is a contributor (cooperator) and \B is a non-contributor (defector). Payoffs are given by
    \[
    a^{(N)}_k=\frac{(k+1)r}{N}-1\ ,\qquad 
    b^{(N)}_k=\frac{kr}{N}\ .
    \]
    Notice that $\mathbf{a}^{(2)}=\left(\frac{r}{N}-1,r-1\right)$ and  $\mathbf{b}^{(2)}=\left(0,\frac{(N-1)r}{N}\right)$ satisfy Eqs.~\eqref{eq:reducAtwo} and~\eqref{eq:reducBtwo}, and, therefore, the $N$-PGG is reducible to a series of pairwise interactions. The resulting matrix  is given by
    \[
    \mathbf{B}=\mathbf{A}=\left(\begin{matrix}
        r-1&\frac{r}{N}-1\\
        \frac{(N-1)r}{N}&0
    \end{matrix}\right)\ ,
    \]
and does not represent a payoff in a 2-PGG. 
If $r<N$, then \B dominates and the only Nash equilibrium is $(\B,\B)$ (the tragedy of the commons). For $1<r<N$, the problem is reduced to the prisoner's dilemma, from which the 2-PGG is a particular case. If $r>N$, \A dominates and the Nash equilibrium is given by $(\A,\A)$.  However, for any value of $r>0$, $\frac{r}{N}-1<\frac{(N-1)r}{N}$ and, therefore \B is insuperable: any evolutionary dynamics will likely evolve towards a population of defectors, even considering that this behavior is not rational for $r>N$.
    \end{example}

\begin{remark}
The concept of game reduction can be generalized from $N$ to any $N'<N$, and in this case the affine dependency on $k$, cf. Eqs~\eqref{eq:reducAtwo} and~\eqref{eq:reducBtwo}, may be relaxed. Furthermore, we may consider payoff reduction from $\ba^{(N)}$ to $\ba^{(N')}$, without necessarily considering the same game reduction for $\bb^{(N)}$. When this happens, we are naturally led to the recently introduced \emph{variable size game theory}, in which the number of players involved in a game depends on the strategy of each participant~\citep{HansenChalub_JTB24}. A more general study of game reduction will be done elsewhere.  
\end{remark}
    
The reduction can also be misleading: the next example shows a game without a dominant or insuperable strategy, which is, however, equivalent to a reducible one, but such that the reduced game has insuperable strategies.

\begin{example}
\emph{Zerinho-ou-um} is a child's play that has two winners\footnote{This game, which reminds both authors of their infancy in Rio de Janeiro, Brazil, is, in fact, a pre-game, i.e., a game used to select players for a different, and more important game. We are not aware of any English name for this or a similar game. In Portuguese it reads ``little zero or one'', representing the two possible strategies, and was used to select, for example, among three candidates, the two that would play the first ping-pong match, and who would stand by waiting for the next round.}. In the simplest form, it is a symmetric three-player game, in which all players choose between two strategies: put \textsc{zero} (strategy \A) or \textsc{one} (strategy \B) fingers; if two players make equal moves, then they win, while the other is eliminated. If the three players make the same move, the game is repeated. The payoff table is given by
\begin{center}
    \begin{tabular}{c|ccc}
& 0-\A & 1-\A & 2-\A\\
\hline
\A & 0 & 2 & 1\\
\B & 1 & 2 & 0
    \end{tabular}\ ,
\end{center}
where 0 means defeat (playing \A against two \B, or the other way round), 2 means victory (playing \A or \B against one \A player and one \B player), and 1 means that a new round must be played (all players make the same move).

Neither pure strategy is dominant nor insuperable. There are three Nash equilibria in this game: $(\A,\A,\A)$, $(\B,\B,\B)$, and the symmetric equilibrium with mixed strategy $x=\frac{1}{2}$, that consists in playing \A or \B with equal probability. (This is one of the weakest points of playing this game: if all players start, by chance, playing the same strategies, it is possible to enter a very long series of repeated rounds without a winner, as no one would be available to change strategy. It would pay to change strategy, only if one is suspicious that any other player is likely to change; this is strikingly different from the more well-known Rock-Scissor-Paper game, in which after a tie there is a strong incentive to be the first to change strategy.)

The game defined in the previous table is not reducible, but as we will show, an immaterial change in the payoff matrix will make it formally reducible. We may understand the \emph{zerinho-ou-um} as a series of three pairwise comparisons, in which both players receive a certain amount of points $\alpha>0$ if they match and receive zero otherwise. In this case, if all players opt for the same strategy, they will receive an average of $\alpha$ points. If two players opt for the same strategy (say, \A) and one for the other (say, \B), then the \A players will receive $\alpha/2$, and the \B player will receive 0. The new game payoff table is    
\begin{center}
    \begin{tabular}{c|ccc}
& 0-\A & 1-\A & 2-\A\\
\hline
\A & 0 & $\alpha/2$ & $\alpha$\\
\B & $\alpha$ & $\alpha/2$ & 0
    \end{tabular}\ .
\end{center}
The interpretation of the above table is not so straightforward, but in a game in which we compare the difference of points among players, $a^{(N)}_{N-1}$ and $b^{(N)}_0$ are immaterial, as they related to the payoff of an \A (\B, respect.) player in a homogeneous population of \A (\B, respect.) players. 

The new game is reducible and the payoff matrix of the reduced game is given by $\left(\begin{smallmatrix}\alpha&0\\0&\alpha\end{smallmatrix}\right)$, a coordination game. 
Nash equilibria are given by $(\A,\A)$, $(\B,\B)$ and strategy $x=\frac{1}{2}$, namely to play \A or \B at random, with probability $1/2$. On the other hand, in the two-player reduction, any strategy is insuperable. This follows from the fact that $\mathbf{L}=\mathbf{0}$. In the end, there is no reason to play a particular strategy in this game, if we are not expecting a long chain of interactions that could allow the detection of the opponents' strategies.
Extension of the reduced two-player game to a $N$-player game is straightforward: $a^{(N)}_k=k\alpha$, $b^{(N)}_k=(N-1-k)\alpha$. In particular, after one round of the \emph{zerinho-ou-um} game, all players that used the minority strategy are eliminated. If one of the groups is empty or if there is a tie, the game is repeated.
\end{example}

\begin{remark}\label{rem:extremes}
    The change to the \emph{zerinho-ou-um} game from irreducible to reducible is more general than it seems. In fact, when we are interested in a direct comparison of game payoffs, the value of $a^{(N)}_{N-1}$ and $b^{(N)}_0$ are arbitrary, as they report the game result in a homogeneous population. Therefore, any three-player game can be made reducible, and any two-player game in which \B dominates and \A is insuperable may be changed to be equivalent to a three-player game in which the same is true. It is enough to change the value of $a^{(2)}_1$ (keeping it smaller than $b^{(2)}_1$) and $b^{(2)}_0$ (keeping it larger than $a^{(2)}_0$) such that from $a^{(2)}_0\ge b^{(2)}_0$ it is possible to conclude that 
    \[
a^{(3)}_0=a^{(2)}_0\ge \frac{b^{(2)}_0+b^{(2)}_1}{2}=b^{(3)}_1\ ,\quad
a^{(3)}_1=\frac{a^{(2)}_0+a^{(2)}_1}{2}\ge b^{(2)}_1=b^{(3)}_2\ .
    \]
    Although these modified games are equivalent from the point of view of direct payoff comparisons, their payoffs are different in absolute value. Therefore inferences such as that an insuperable strategy is the most likely strategy to be fixated in small populations will not necessarily hold. We also point out that the strong dependence of the game dynamics in finite populations, outside the weak-selection regime was already found in \citet{chalub2016fixation}.
\end{remark}

\section{Discussion and conclusions}
\label{sec:conclusions}

The concept of \emph{insuperable strategy}, introduced here, is useful when we consider players that try to outperform their competitors. Instead, when one assumes that they are rational, as economic theory normally does, Nash equilibrium strategy is the expected outcome of a normal-form game.

However, in many situations, we directly compete with an opponent, and the primary objective of the participant is to win first place, even if this is associated with minimum victories. Think about sports competitions, games such as the war of attrition~\citep{Broom_Rychtar_2013,BishopCannings_1978}, or, if this sounds far-reaching, think of direct competition between economic agents when it is clear that the number of firms is larger than the market capacity~\citep{BulowKlemperer}. Some of them will go bankrupt, and more than short-term profit maximization, an important drive will be short-term survival. The same applies to a small population if every interaction has life-or-death consequences. By continuity, we cannot expect that Nash equilibrium emerges as a consequence of evolutionary dynamics in small populations, identifying reproductive capacity (or fitness) with game payoffs.

In most of human existence, we lived in communities that are small when compared to the ones that we have in modern life. Therefore, it is possible that we have hard-wired in our genetic background behaviors that seem maladaptive in the current society but that could be justified as a memory of the past with a smaller number of social interactions, cf.~\citet{Rayner_Sturiale} for examples of contemporary maladaptive behavior that were positively selected in different conditions that existed in the past. Two possible examples in the case of human behavior are the case of \emph{envy}, the pain caused by the
sight of other people's good fortune~\citep{Benzeev_1992}, and \emph{spite}, the act of paying (losing payoff) to make one's direct competitor lose even more~\citep{Hamilton_70}. 

In the Ultimatum game, strategies $\ge m'$, with $m'>1$ indicates \emph{spiteful} behavior, as the recipient is willing to receive nothing instead of $0<m''<m'$, to punish the donor~\citep{Marlowe_eta_PRSB11}; in this sense, our model shows clearly, at least in one example, the evolution of spiteful behavior. Still in the Ultimatum game, if we consider that despite the donor's and recipient's strategies evolving independently, there is a strong selective pressure to have $m'=m$, then the only symmetric insuperable strategy will be given by $(\frac{M}{2},\ge\frac{M}{2})$ (assuming $M$ even). This selective pressure can be caused by social pressure/ethical behavior, well expressed in the quote: \emph{It is not fair to ask of others what you are not willing to do yourself}~\citep{Roosevelt}.

Although the evidence is not strong, previous studies show that young children tend to maximize payoff, even in games specifically designed to privilege relative maximization of payoff, while older children understand better situations in which payoff maximization is preferred (playing the Nash equilibrium strategy), and which payoff \emph{relativization} is preferred, playing the insuperable strategy~\citep{Toda_1978,d_almeida_2015_19607}.

One important feature of the present work, which differs from several studies in human behavior (cf. the discussion in~\citet{Camerer_1997}), is to consider deviation of the Nash equilibrium as the expected behavior of game participants with different goals, rather than as anomalies or maladaptive behavior. 

Our work was partially inspired by the concept of Hamiltonian spite~\citep{Hamilton_70}. However, we are not aware of any mathematical study of this concept with the depth and generalization provided here. Furthermore, many technical details in the present work are different from the study of the spite dynamics, as stressed in Rmk.~\ref{rem:spite}. Another source of inspiration for the present work is the concept of \emph{Evolutionarily Stable Strategy} in finite populations, namely, in a population of two individuals playing two pure strategies: this is the $\textsc{ess}_2$, introduced in~\citet{NowakSasakiTaylorFudenberg}; see also the discussion in~\citet{Nowak:06}. Previous works, however, do not extend this concept to non-symmetric or to multi-strategy games, as we did here, or even to multi-player games, as we also did, although in a limited scope.

From a dynamic viewpoint, the concept of insuperable strategy stems from an apparent contradiction that is seldom explored: The Nash equilibrium is not necessarily the attractor of finite population evolutionary dynamics. With this rationale in mind, we introduced a new concept of equilibrium: The insuperable strategy, initially in the domain of two-player two-strategy games and subsequently extended to multi-player two-strategy symmetric games. Furthermore, this new notion of equilibrium leads us to the study of reducible games, or, equivalently, multiplayer extensions of games. The latter prompts us to study how concepts such as dominance, and insuperability propagate from games with a small number of players to games with a large number of players and vice-versa. This is an ongoing work.

\section*{Acknowledgements}
All the authors are funded by Portuguese national funds through the FCT – Fundação para a Ciência e a Tecnologia, I.P. (Portugal), under the scope of the projects UIDB/00297/2020 (https://doi.org/10.54499/UIDB/00297/2020) and UIDP/00297/2020 (https://doi.org/10.54499/UIDP/00297/2020) (Center for Mathematics and Applications --- NOVA Math). FACCC also acknowledges the support of the project \emph{Mathematical Modelling of Multi-scale Control Systems: applications to human diseases}  2022.03091.PTDC (https://doi.org/10.54499/2022.03091.PTDC), supported by national funds (OE), through FCT/MCTES. (CoSysM3). Part of this work was done during the stay of FACCC at Carnegie Mellon University, supported by the CMU-Portugal Program. Furthermore, part of this work was done during FACC stays at City University, London (UK), Universidade Federal do Ceará (Brazil), and Instituto de Matemática Pura e Aplicada (Rio de Janeiro, Brazil). FACCC also acknowledges discussions on the concept of Hamiltonian spite with André d'Almeida, which eventually led to preliminary ideas in the concept of insuperable strategies. 
MOS also acknowledges the support of CAPES/BR - Finance code 01 and FAPERJ through grant E-26/210.440.2019. All authors acknowledge the support of the CAPES PRINT program at UFF through grant 88881.310210/2018-01.
Last, but not least, FACCC acknowledges the input of his daughter, Alice, to explain how the $N$-player generalization of the \emph{zerinho-ou-um} game is played nowadays by school children.

\end{document}